\newif\ifdraft
\newif\ifhideproofs
\definecolor{darkgreen}{rgb}{0,0.5,0}
\definecolor{darkblue}{rgb}{0,0,0.8}
\g@addto@macro\bfseries{\boldmath}
\declaretheorem[name=Theorem,numberwithin=section]{theorem}
\newtheorem{lemma}[theorem]{Lemma}
\newtheorem{corollary}[theorem]{Corollary}
\newtheorem{claim}[theorem]{Claim}
\newtheorem{definition}{Definition}[section]
\newcommand{\mypara}[1]{
\smallskip

\noindent\textbf{#1}}
\newcommand{\NF}{N^F}
\newcommand{\NNF}{N^{\overline{F}}}
\newcommand{\Deltab}{\tilde{\Delta}}
\DeclareMathOperator*{\argmin}{arg\,min}
\newcommand{\ignore}[1]{}
\algnewcommand\algorithmicswitch{\textbf{switch}}
\algnewcommand\algorithmiccase{\textbf{case}}
\newcommand{\CONGEST}{\ensuremath{\mathsf{CONGEST}}\xspace}
\newcommand{\LOCAL}{\ensuremath{\mathsf{LOCAL}}\xspace}
\newcommand{\PSLOCAL}{\ensuremath{\mathsf{P-SLOCAL}}\xspace}
\newcommand{\eps}{\varepsilon}
\newcommand{\poly}{\operatorname{\text{{\rm poly}}}}
\newcommand{\set}[1]{\left\{#1\right\}}
\newcommand{\logstar}[1]{\log^{*} #1}
\DeclareMathOperator{\E}{\mathbb{E}}
\DeclareMathOperator{\polylog}{\poly\log}
\newcommand{\CC}{\mathcal{C}}
\newcommand{\complexityclass}[2][]{\ensuremath{\mathsf{#2}\ifthenelse{\isempty{#1}}{}{(#1)}}}
\setlist[enumerate]{itemsep=0mm}
\newcommand{\hide}[1]{}
\newcommand{\FullOrShort}{short}
  \newcommand{\fullOnly}[1]{#1}
  \newcommand{\shortOnly}[1]{}
  \newcommand{\shortOnly}[1]{#1}
  \newcommand{\fullOnly}[1]{}
\begin{document}

\newcommand{\auth}[3]{\textbf{#1}\par#2\par#3\par\medskip}

\date{}

\title{Deterministic Distributed Dominating Set Approximation\\
in the CONGEST Model}

%\author{
%  Janosch Deurer\\
%  \small U.\ of Freiburg, Germany\\
%  \small deurer@tf.uni-freiburg.de
%\and
% Fabian Kuhn\\
%  \small U.\ of Freiburg, Germany\\
%  \small kuhn@cs.uni-freiburg.de
%\and 
%Yannic Maus\\
%  \small U.\ of Freiburg, Germany\\
%  \small yannic.maus@cs.uni-freiburg.de
% }

\maketitle

\vspace*{-1.5cm}

\auth{Janosch Deurer}
{University of Freiburg, Germany}
{deurer@cs.uni-freiburg.de}

\auth{Fabian Kuhn\footnote{Partly supported by ERC Grant No.\ 336495 (ACDC).}}
{University of Freiburg, Germany}
{kuhn@cs.uni-freiburg.de}
\auth{Yannic Maus\footnotemark[1]}
{Technion, Haifa, Israel}
{yannic.maus@cs.technion.ac.il}

\begin{abstract}
  \normalsize
  We develop deterministic approximation algorithms for the minimum
  dominating set problem in the \CONGEST model with an almost optimal
  approximation guarantee. For $\eps>1/\polylog \Delta$ we obtain 
  two  algorithms with approximation factor $(1+\eps)(1+\ln (\Delta+1))$ and
  with runtimes $2^{O(\sqrt{\log n \log\log n})}$ and
  $O(\Delta\polylog \Delta +\polylog \Delta \logstar n)$, respectively.
  Further we show how dominating set approximations can be
  deterministically transformed into a connected dominating set in the
  \CONGEST model while only increasing the approximation guarantee by
  a constant factor.  This results in a deterministic
  $O(\log \Delta)$-approximation algorithm for the minimum connected
  dominating set with time complexity
  $2^{O(\sqrt{\log n \log\log n})}$.
\end{abstract}

\newpage
%\tableofcontents
\section{Introduction \& Related Work}
Given a graph $G=(V,E)$, a dominating set $S\subseteq V$ of $G$ is a set of nodes of $G$ such that any node $u\not\in S$ has at least one neighbor $v\in S$. The minimum dominating set (MDS) problem---the objective is to compute a dominating set with minimal cardinality---and the equivalent minimum set cover problem are among the oldest classic combinatorial optimization problems \cite{karp72,johnson74,garey79}. Motivated by potential practical applications (e.g., as a tool for clustering in wireless ad hoc and sensor networks), the minimum dominating set problem has also found  attention in a distributed setting. In the last two decades, distributed approximation algorithms for minimum dominating set and closely related problems such as minimum set cover, minimum connected dominating set, or minimum vertex cover have been studied intensively (see, e.g., \cite{jia02,dubhashi05,constanttime,Kuhn06,lenzen08,LenzenPW13,ghaffari14_CDS,kuhn16_jacm,SLOCAL17}).

It is well known that for graphs with maximum degree $\Delta$, in the standard, sequential setting, the basic greedy algorithm achieves a $\ln(\Delta+1)$-approximation \cite{johnson74} and that up to lower order terms, this is the best that can be achieved in polynomial time unless $\mathsf{P}=\mathsf{NP}$ \cite{dinursteurer14}. The first efficient distributed approximation algorithm was given by Jia, Rajaraman, and Suel~\cite{jia02} and it achieves an approximation ratio of $O(\log\Delta)$ in time $O(\log^2 n)$. The algorithm is randomized and it works in the \CONGEST model (i.e., with messages of size $O(\log n)$).\footnote{For a formal definition of the distributed communication models we use, we refer to \Cref{sec:model}.} These bounds were improved by Kuhn, Moscibroda, and Wattenhofer \cite{Kuhn06}, who showed that it is possible to compute an $(1+\eps)(1+\ln(\Delta+1))$-approximation in time $O(\log^2\Delta / \eps^4)$ in the \CONGEST model and in time $O(\log n / \eps^2)$ in the \LOCAL model. Recently, in \cite{SLOCAL17}, Ghaffari, Kuhn, and Maus showed that, in the \LOCAL model, at the cost of allowing exponential local computations at the nodes, it is even possible to compute a $(1+o(1))$-approximation in a polylogarithmic number of rounds.

All the above distributed algorithms critically depend on the use of \emph{randomization} for breaking symmetries in an efficient manner. The only existing efficient \emph{deterministic} distributed algorithms for approximating MDS are based on a generic tool known as network decomposition \cite{awerbuch89,linial93,panconesi95,DISC18_DomSet}. These algorithms only work in the \LOCAL model (i.e., they require large messages) or they only achieve a suboptimal approximation ratio. Concretely, in the \LOCAL model, it is possible to compute a $(1+o(1))$-approximation\footnote{If internal computations at the nodes are restricted to polynomial time, it is possible to obtain a $(1+o(1))\ln(\Delta+1)$-approximation in the same asymptotic time.} of MDS in time $2^{O(\sqrt{\log n})}$ \cite{SLOCAL17} and in the \CONGEST model, an $O(\log^2 n)$-approximation can be computed in time $2^{O(\sqrt{\log n\log\log n})}$ \cite{DISC18_DomSet}. Note that these deterministic algorithms are exponentially slower than their randomized counterparts. This general behavior is true for many important distributed graph problems such as, e.g., the problems of computing a vertex coloring with $\Delta+1$ or even $O(\Delta)$ colors, the problem of computing a maximal independent set (MIS), or the problem of computing various kinds of (in some sense sparse) decompositions of the network graph into clusters of small diameter \cite{barenboimelkin_book,SLOCAL17}. While there are extremely efficient (at most polylogarithmic time) and often very simple randomized algorithms, the best deterministic algorithms are exponentially slower and require $2^{O(\sqrt{\log n})}$. Typically, these deterministic algorithms also abuse the power of the \LOCAL model by using network decompositions in a pretty brute-force kind of way: For graphs of small diameter, the algorithms essentially reduce to collecting the whole topology at a single node---this cannot be done in the \CONGEST model---and compute the solution in a centralized way, often through solving NP-complete problems.  Whether an exponential gap between the best randomized and deterministic algorithms for these problems is really necessary is considered to be one of the key open questions in the area of distributed graph algorithms \cite{barenboimelkin_book,SLOCAL17,linial92,panconesi95,MausThesis}. In the context of this more general question, Ghaffari, Harris, and Kuhn recently showed in \cite{newHypergraphMatching} that also the MDS problem has a key role in this more general question. By using (and extending) a framework developed in \cite{SLOCAL17}, they showed the MDS problem is \PSLOCAL complete, meaning: If there is a polylogarithmic-time distributed deterministic algorithm that computes any polylogarithmic approximation for the MDS problem, there are polylogarithmic-time deterministic distributed algorithms for essentially all\footnote{This in particular includes all problems where a solution can be locally checked in polylogarithmic time and all problems for which efficient distributed Las Vegas algorithms exist. It also includes many optimization problems such approximating MDS or maximum independent set.} problems for which there are efficient randomized algorithms.

\subsection{Contributions}

In the present paper, we give the first deterministic distributed algorithms for the
minimum dominating set problem in the \CONGEST model, which are efficient in terms of their time complexity and at the same time obtain an essentially optimal approximation ratio. Our first result is 
based on first computing a network decomposition by an algorithm of
\cite{awerbuch89,DISC18_DomSet}. Unlike in the \LOCAL model, where
partial solutions of clusters of small diameter can be computed in
time linear in the cluster diameter by simply collecting the whole
cluster topology at one node, in the \CONGEST model, we have to be
much more careful. Inspired by recent work on other local problems
\cite{DISC17_MIS,DISC18_DomSet}, we design a randomized \CONGEST
algorithm that  only requires $\polylog n$-wise independent randomness. Then we use the method of conditional expectations to efficiently
derandomize this algorithm. We note that the method of conditional expectations has been quite intensively used in the last couple of years in order to derandomize randomized distributed graph algorithms (see, e.g., \cite{DISC17_MIS,newHypergraphMatching,harris18,kawarabayashi18,parter18_coloring,parter18_spanners}).
\begin{theorem} \label{thm:mainThmN} For any $\eps>1/\polylog \Delta$,
  there exists a deterministic \CONGEST algorithm that computes an
  $(1+\eps)\cdot (1+\ln (\Delta+1))$-approximation for MDS in $2^{O(\sqrt{\log n\log\log n})}$ rounds. 
\end{theorem}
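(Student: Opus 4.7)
The plan is to combine three ingredients: (i) an LP-based $(1+\eps)(1+\ln(\Delta+1))$-approximation scheme whose analysis goes through with only $\polylog(n)$-wise independent randomness, (ii) the deterministic \CONGEST network decomposition of \cite{DISC18_DomSet}, and (iii) a derandomization via the method of conditional expectations carried out locally inside the decomposition clusters.

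First, I would compute an approximate fractional LP solution $x\in[0,1]^V$ satisfying $\sum_{u\in N[v]}x_u\ge 1$ for every $v$, with cost at most $(1+\eps/3)\cdot\mathrm{OPT}$, using a deterministic covering-LP solver in the style of \cite{Kuhn06}. The randomized rounding then picks each vertex $v$ independently with probability $p_v=\min\{1,\alpha x_v\}$ for $\alpha=(1+\eps/3)\ln(\Delta+1)$ and, for every vertex still uncovered, adds an arbitrary dominating neighbor. A standard Chernoff-style analysis bounds the expected output size by $(1+\eps)(1+\ln(\Delta+1))\cdot\mathrm{OPT}$; since the required tail bounds hold already under $k$-wise independence for $k=\polylog n$, all randomness can be drawn from a seed of $L=O(\log^2 n)$ bits via a standard generator. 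The precondition $\eps>1/\polylog\Delta$ enters through the accuracy of the LP solver and the degree of independence required.

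Second, I would run the deterministic $(C,D)$-network decomposition of \cite{DISC18_DomSet} with $C,D=2^{O(\sqrt{\log n\log\log n})}$, process color classes sequentially, and let the (vertex-disjoint) clusters of the current color derandomize in parallel. Inside a cluster the $L$ seed bits are fixed greedily, one at a time, in whichever direction keeps a pessimistic estimator $\Phi$ non-positive; by the rounding analysis, the unconditional expectation of $\Phi$ over the full seed is non-positive, so such a choice always exists. The estimator decomposes as a sum of per-vertex terms---an expected-picks term plus an expected residual-cost term that depends only on the $x$-values in the closed neighborhood of each vertex---hence each conditioning step is a single aggregation over the BFS tree of the cluster, costing $O(D)$ rounds in \CONGEST. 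In total the round count is $C\cdot L\cdot O(D)=2^{O(\sqrt{\log n\log\log n})}$.

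The main obstacle will be making the pessimistic estimator compatible with \CONGEST bandwidth rather than just with \LOCAL. One has to verify that, after any partial fixing of seed bits, each vertex's contribution to $\Phi$ is a low-degree polynomial in the remaining bits whose coefficients depend only on the $x$-values of its neighbors, so that a single conditional expectation can be summed over a cluster via BFS with $O(\log n)$-bit messages. One must also ensure consistency across simultaneously-derandomizing clusters---which follows from vertex-disjointness within a color class---and keep the accumulated additive slack in $\Phi$ over the $L$ conditioning steps well below the $\eps$ budget so that the approximation guarantee is preserved.
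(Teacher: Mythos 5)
Your high-level architecture (fractional solution from \cite{Kuhn06}, limited independence from a shared seed, network decomposition, conditional expectations inside clusters) matches the paper, but your rounding step has a genuine gap. You round the raw LP solution in a \emph{single} shot with probabilities $p_v=\min\{1,\alpha x_v\}$, $\alpha\approx\ln(\Delta+1)$, and assert that the tail bound $\Pr[v\text{ uncovered}]\le \Delta^{-(1+\Omega(\eps))}$ "holds already under $k$-wise independence for $k=\polylog n$" by a standard Chernoff-style analysis. It does not. The needed bound is the \emph{product} bound $\prod_{u\in N[v]}(1-p_u)\le e^{-\alpha}$, which with full independence depends on all (up to $\Delta+1$) coins in the neighborhood; the fractional values can be as small as $1/\poly(\Delta)$, so the relevant sum has mean only $\mu\approx\ln\Delta$ spread over $\Theta(\Delta)$ coins. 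Chernoff-type bounds under $k$-wise independence (e.g., the bound the paper itself uses, Theorem~\ref{thm:kindependent}) lose a constant factor in the exponent, giving only $e^{-c\mu}=\Delta^{-c}$ for some $c<1$ (even with full independence, a lower-tail Chernoff argument gives at best $e^{-\mu/2}$). Then the expected patching cost is $n/\Delta^{c}$, which on, say, a disjoint union of stars is $\Delta^{1-c}\cdot\Omega(\mathrm{OPT})$ and destroys the $(1+\eps)(1+\ln(\Delta+1))$ guarantee. This is exactly why the paper does \emph{not} round in one shot: it first runs $O(\log\Delta)$ deterministic "factor two" rounding steps (each losing only a $(1+O(\eps/\log\Delta))$ factor) to boost the fractionality to $1/F$ with $F=\polylog\Delta$; only then, in the final one-shot step, each constraint is covered by at most $F$ coins, so $F$-wise independence already recovers the exact product bound $e^{-\ln(\Delta+1)}=1/(\Delta+1)$ (Lemma~\ref{lem:oneshotUncover}), and the gradual steps themselves concentrate under $O(\log\Delta)$-wise independence because the participating values are small (Lemma~\ref{lem:factorTwoUncover}). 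To repair your proof you would either have to adopt this iterative fractionality-boosting, or replace the Chernoff argument by a genuinely different one (e.g., truncated inclusion--exclusion exploiting $k\gg\mu$), which you neither state nor prove.

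A secondary issue: for the parallel fixing of seed bits you claim consistency "follows from vertex-disjointness within a color class." That is not sufficient. A vertex $v$ outside two same-color clusters can be adjacent to both, so its term in the estimator depends on both clusters' seeds, and simultaneous greedy fixing by the two clusters need not preserve the conditional expectation. The paper avoids this by using the $2$-hop (i.e., $G^2$) decomposition of \cite{DISC18_DomSet}, so that same-color clusters are at distance greater than $2$ and the sets $N(\mathcal{C})$ they influence are disjoint; this is also precisely where the $2^{O(\sqrt{\log n\log\log n})}$ running time comes from. This part is fixable by switching to the $2$-hop decomposition, but as written it is a gap.
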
\

The above theorem improves a result by Ghaffari and Kuhn in \cite{DISC18_DomSet}, where it was shown that, in the same asymptotic time complexity, it is possible to compute an $O(\log^2 n)$-approximation for the MDS problem. We note that while our algorithm is based on carefully and gradually rounding a given fractional solution to an integer solution (cf.\ \Cref{sec:nutshell}), the algorithm of \cite{DISC18_DomSet} is based on a relatively direct parallel implementation of the sequential greedy algorithm, where the decision of which nodes to add in each greedy step is achieved by derandomizing a basic distributed hitting set problem.

The time complexity of our algorithm is $2^{O(\sqrt{\log n \log \log n})}$ because this is the best time complexity for deterministically computing a network decomposition of $G^2$ in the \CONGEST model \cite{DISC18_DomSet} (i.e., a network decomposition, where clusters of the same color are at distance at least $2$ from each other). An improved deterministic network decomposition algorithm would directly also lead to an improved deterministic MDS algorithm.\footnote{We note that in a recent paper, Ghaffari \cite{ghaffari19_MIS} showed that there is a deterministic \CONGEST model algorithm to compute a strong diameter decomposition of $G$ in time $2^{O(\sqrt{\log n})}$. In on-going unpublished work \cite{ghaffari19_personal}, it is even shown that there is a deterministic \CONGEST algorithm to compute a weak diameter decomposition of powers of $G$ in time $2^{O(\sqrt{\log n})}$, which has clusters that can be efficiently used in the \CONGEST model. As a result, the time complexity of the above theorem can be improved from $2^{O(\sqrt{\log n \log\log n})}$ to $2^{O(\sqrt{\log n})}$.}

In addition to the algorithm based on network decomposition, we also
provide a more local way to derandomize a similar randomized \CONGEST
algorithm, resulting in a time complexity that mainly depends on the
maximum degree $\Delta$ of the graph.

\begin{theorem} \label{thm:mainThmDelta} For any $\eps>1/\polylog \Delta$,
  there exists a deterministic \CONGEST algorithm that computes an
  $(1+\eps)\cdot (1+\ln (\Delta+1))$-approximation for MDS in $O(\Delta\cdot \polylog \Delta + \polylog\Delta\cdot \logstar n)$
  rounds. 
\end{theorem}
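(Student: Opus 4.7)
The plan is to reuse the same LP-rounding strategy that underlies \Cref{thm:mainThmN} but to replace its network-decomposition-based derandomization by a local, coloring-based one whose cost depends mainly on $\Delta$ instead of on global parameters of $G$.

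As the first step I would compute a $(1+\eps)$-approximate fractional MDS solution $x\in[0,1]^V$ in $O(\polylog \Delta/\eps^2)$ \CONGEST rounds via the Kuhn--Moscibroda--Wattenhofer LP solver. I then apply the same randomized LP-rounding scheme as in \Cref{thm:mainThmN}, which proceeds in $O(\log(\Delta+1))$ phases: each phase samples nodes into the dominating set with probabilities proportional to the current fractional values and halves, in expectation, the residual fractional uncovered demand. The analysis hinges on a pessimistic estimator $\Phi=\sum_v \Phi_v$ whose conditional expectation upper-bounds the expected approximation gap and in which each summand $\Phi_v$ depends only on the random choices inside a constant-radius neighborhood of $v$. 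Consequently the random bits of one phase can be drawn from a $k$-wise independent distribution with $k=\polylog \Delta$, so that each node only needs $\polylog \Delta$ seed bits.

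To derandomize a single phase I would first compute, once and for all, a proper coloring of a suitable constant power $G^c$ of $G$ using Linial's $O(\Delta^{c})$-coloring in $\logstar n$ rounds, followed by Barenboim--Elkin-style color reduction down to $O(\Delta)\cdot\polylog\Delta$ classes, for a total of $O(\polylog\Delta\cdot\logstar n)$ rounds. Color classes are then processed one at a time. Because any two same-color nodes are at distance at least $c$ in $G$, their estimators $\Phi_u,\Phi_v$ share no random variables, so all same-color nodes can simultaneously and independently fix their seed bits by greedily minimizing the conditional expectation of $\Phi$ restricted to their constant-radius neighborhoods. Each color class is handled in $\polylog\Delta$ rounds, yielding $O(\Delta\polylog\Delta)$ rounds per phase; multiplying by the $O(\log\Delta)$ phases and adding the one-time coloring cost gives the bound claimed in the statement.

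The main obstacle is to carry out the conditional-expectation step in the bandwidth-limited \CONGEST model while many nodes update concurrently. Two ingredients are required: first, each $\Phi_v$ must be expressible as a sum of $O(\Delta)$ short polynomial contributions from its neighbors, so that its conditional expectation can be aggregated over its constant-radius neighborhood in $O(1)$ rounds using $O(\log n)$-bit messages; second, the coloring of $G^c$ must guarantee that the radius-$O(1)$ neighborhoods queried by nodes of the same color are pairwise edge-disjoint, so that simultaneous aggregations do not collide. The first property is delivered by the concrete form of the pessimistic estimator inherited from the proof of \Cref{thm:mainThmN}, and the second is obtained by choosing $c$ slightly larger than twice the dependency radius of $\Phi_v$.
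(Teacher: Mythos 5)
Your high-level plan (iterative rounding, derandomized phase by phase via the method of conditional expectations, processed along color classes of a power of $G$) matches the paper's strategy, but it contains a concrete step that fails: you assume one can compute a proper coloring of $G^c$ (for a constant $c\geq 2$) with $O(\Delta\cdot\polylog\Delta)$ colors by Linial plus Barenboim--Elkin color reduction. No such coloring exists in general: the chromatic number of $G^2$ can be $\Theta(\Delta^2)$ (e.g., in the incidence graph of a projective plane of order $q$ the points pairwise share a line, so they form a clique of size $\approx\Delta^2$ in $G^2$), so no color-reduction scheme can get below that. With the only colorings that are actually available, your per-phase cost becomes $O(\Delta^2)$, not $O(\Delta\,\polylog\Delta)$, and the claimed runtime does not follow. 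This is exactly the obstacle the paper has to work around, and it is the technical heart of \Cref{sec:derandomizationD}: instead of coloring a power of $G$, the paper moves to the bipartite representation $B_G$ (constraint nodes on one side, value nodes on the other) and \emph{reduces the constraint-node degrees} before coloring. In the final one-shot step (\Cref{lem:deltaDetOneRounding}), $1/F$-fractionality lets every constraint node keep only $F=\polylog\Delta$ covering neighbors; in each doubling step (\Cref{lem:deltaDetTwoRounding}), each constraint node is split into pieces of degree $O(\eps^{-2}\log\Delta)$ among the participating (small-value) neighbors, while a designated piece handles the non-participating neighbors whose values already cover it deterministically. Only then does a distance-two coloring of the participating value nodes with $\Delta_L\cdot\Delta_R=O(\Delta\,\polylog\Delta)$ colors exist and can be computed in $O(\Delta\,\polylog\Delta+\polylog\Delta\cdot\logstar n)$ rounds (\Cref{lem:bipatiteColoring}), which feeds into the coloring-based derandomization lemma (\Cref{lem:derandomizationDelta}).

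Two further points. First, the constraint splitting is also the reason the rounding must be gradual in the $\Delta$-parameterized algorithm: concentration (Chernoff) must hold \emph{per constraint piece} of degree only $O(\eps^{-2}\log\Delta)$, which works when each step merely doubles values of nodes whose values are below $2/r$, but would not survive a single rounding step from very low fractionality; your proposal never addresses why the split constraints remain satisfied with high probability. Second, your appeal to $k$-wise independence with $k=\polylog\Delta$ is imported from the network-decomposition derandomization of \Cref{thm:mainThmN} and is not what the coloring-based derandomization uses: there, each participating node's single biased coin is fixed directly, color class by color class, using fully independent unfixed coins, and the distance-two property guarantees that simultaneously deciding nodes have disjoint neighborhoods, so their conditional expectations do not interact. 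Limited independence is neither needed nor helpful in this branch of the argument.
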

We want to emphasize that, except for the $O(\log^2 n)$-approximation algorithm in \cite{DISC18_DomSet}, our results are the first deterministic algorithms for MDS in the \CONGEST model.
By substituting a  vertex coloring subroutine in the algorithm of \Cref{thm:mainThmDelta} by its \LOCAL model counterpart this directly also leads to  an improved and slightly more efficient deterministic
distributed MDS algorithm in the \LOCAL model.

\begin{corollary} \label{corr:mainCorDelta} For any
  $\eps>1/\polylog \Delta$, there exists a deterministic \LOCAL
  algorithm that computes an $(1+\eps)\cdot \ln (\Delta+1)$ approximation
  for MDS in
  $O(\Delta\cdot \polylog \Delta + \logstar n)$ rounds.
\end{corollary}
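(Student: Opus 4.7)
The plan is to reuse the algorithm behind \Cref{thm:mainThmDelta} essentially verbatim and substitute only its vertex coloring subroutine by the LOCAL counterpart, as announced in the paragraph preceding the corollary.

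First I would identify the coloring subroutine inside the proof of \Cref{thm:mainThmDelta}. The algorithm performs deterministic rounding decisions at each node, and to do so in parallel it relies on a coloring of a small constant power of $G$ with $\poly(\Delta)$ colors, so that nodes processed in the same round cannot interfere with each other's conditional-expectation updates. In CONGEST, the state-of-the-art deterministic algorithm for producing such a coloring is exactly what contributes the $\polylog \Delta \cdot \logstar n$ additive term to the runtime of \Cref{thm:mainThmDelta}. In LOCAL, Linial's classical algorithm yields a coloring of $G^c$ for any constant $c$ with $\poly(\Delta)$ colors in $O(\logstar n)$ rounds, with no $\polylog \Delta$ overhead. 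Plugging this into the scheduler immediately replaces the $\polylog \Delta \cdot \logstar n$ term by $\logstar n$, while leaving the $O(\Delta\cdot \polylog \Delta)$ factor (which stems from the interior of each rounding phase, not from scheduling) untouched.

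Next I would re-examine the approximation ratio. In CONGEST, the extra additive $1$ appearing in $(1+\eps)(1+\ln(\Delta+1))$ can be traced to bandwidth-induced slack in the local rounding step: set sizes, weights, and similar counters can only be exchanged up to an additive tolerance, which forces the rounding to be slightly more conservative than the exact greedy move. Without the bandwidth restriction, the rounding inside each color class executes the classical greedy update exactly, yielding the standard $\ln(\Delta+1)$ guarantee, while any residual lower-order loss is absorbed into the $(1+\eps)$ factor using the standing hypothesis $\eps > 1/\polylog \Delta$. This gives the tightened ratio $(1+\eps)\ln(\Delta+1)$.

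The main point to verify, and the only nontrivial part of the proof, is to walk through the full pipeline of \Cref{thm:mainThmDelta} and check that no other subroutine implicitly depends on the CONGEST bandwidth bound; since \Cref{thm:mainThmDelta} is already proved in the stricter CONGEST model, every remaining step is automatically LOCAL-compatible, so once the coloring primitive and the exact rounding step are swapped in, the corollary follows directly.
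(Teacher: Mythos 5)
Your top-level plan (reuse the algorithm of \Cref{thm:mainThmDelta} and swap only the coloring subroutine for its \LOCAL counterpart) is indeed what the paper does, but both of your supporting arguments have genuine problems. First, the runtime: you claim one can take any $\poly(\Delta)$-coloring of a constant power of $G$ via Linial in $O(\logstar n)$ rounds and that the $O(\Delta\polylog\Delta)$ term ``stems from the interior of each rounding phase, not from scheduling.'' This is backwards. The derandomization in \Cref{lem:derandomizationDelta} runs in $O(C)$ rounds where $C$ is the number of colors of the distance-two coloring, i.e.\ the scheduling over color classes \emph{is} the dominant cost. A generic distance-two coloring of $G$ (or a coloring of $G^2$) has $\Theta(\Delta^2)$ colors and would force an $\Omega(\Delta^2)$-round schedule; the whole point of the bipartite representation $\mathcal{B}$ with left degrees reduced to $s=\polylog\Delta$ (as in \Cref{lem:deltaDetOneRounding,lem:deltaDetTwoRounding}) is to make a $C=O(\Delta\polylog\Delta)$-color distance-two coloring of the participating right-hand nodes suffice. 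The only thing that changes in \LOCAL is how this \emph{specific} coloring is computed: in \CONGEST, simulating each round of the coloring algorithm of \Cref{lem:bipatiteColoring} on the virtual conflict graph costs $O(\Delta_L)=O(\polylog\Delta)$ rounds, which multiplies the $\logstar n$ term; with unbounded messages this simulation overhead disappears, giving $O(\Delta\polylog\Delta+\logstar n)$ while the number of colors (and hence the $\Delta\polylog\Delta$ scheduling term) stays the same.

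Second, the approximation ratio: your explanation that the extra ``$+1$'' in $(1+\eps)(1+\ln(\Delta+1))$ is ``bandwidth-induced slack'' that vanishes in \LOCAL, and that without congestion the rounding ``executes the classical greedy update exactly,'' does not match the algorithm at all. There is no greedy step anywhere; the algorithm is LP rounding, and the only bandwidth-related losses are the roundings to transmittable multiples of $n^{-10}$, which contribute additive errors of order $1/\poly(n)$ in \emph{both} models and are negligible. The additive constant in \Cref{thm:mainThmDelta} comes from the $n/\Deltab$ term of \Cref{lem:oneshotUncover}, i.e.\ the expected number of nodes left uncovered by the one-shot rounding that then join the dominating set, measured against $\mathrm{OPT}\geq n/\Deltab$; this term is identical in the \LOCAL version, so your mechanism cannot produce the tightened ratio. (In fairness, the paper's own combined proof is also terse on this point, bounding the guarantee by $(1+\eps)(2+\ln\Deltab)$ and absorbing lower-order terms using $\eps>1/\polylog\Delta$, but it never attributes any part of the loss to message size.) As written, your proposal would establish neither the claimed round complexity (if the coloring is really replaced by an arbitrary $\poly(\Delta)$-color one) nor the claimed approximation factor by the reasoning you give.
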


In addition to considering the basic dominating set problem, we also
provide an algorithm to compute a small connected dominating set
(CDS). Here, the objective is to output a dominating set $S$ such that
the induced subgraph $G[S]$ is connected. It is well known that a
given dominating set $S$ can be extended to a CDS $S'$ of size at most
$|S'|< 3|S|$ by connecting the nodes in $S$ through at most $|S|-1$
paths of length at most $3$. One computes a spanning tree
on the graph induced by the set of nodes in $S$ and with an edge between two
nodes in $S$ whenever they are at distance at most $3$ in $G$. In the
\LOCAL model, one can get a local variant of this construction by
replacing the spanning tree with a high girth connected spanning
subgraph, which can be computed in logarithmic time.  As for
example observed by Dubhashi et al.~\cite{dubhashi05}, this leads to a
randomized polylog-round distributed algorithm with approximation
ratio $O(\log\Delta)$. Ghaffari \cite{ghaffari14_CDS} later showed that
by using an adapted variant of the linear-size skeleton algorithm of Pettie
\cite{pettie10}, it is also possible to compute an
$O(\log\Delta)$-approximation in polylogarithmic randomized time in
the \CONGEST model. We show that by using network decomposition, an
extended variant of Ghaffari's algorithm can be derandomized to obtain
the following result. 

\begin{theorem} \label{thm:mainThmCDS} There exists a deterministic
  \CONGEST algorithm that computes an
  $O(\ln\Delta)$ approximation for minimum connected dominating
  set in $2^{O(\sqrt{\log n \log \log n})}$ rounds. 
\end{theorem}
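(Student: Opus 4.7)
The plan is to split the construction into two phases that mirror the randomized approach of Ghaffari~\cite{ghaffari14_CDS}. In the first phase, I invoke \Cref{thm:mainThmN} with, say, $\eps=1/2$ to obtain a dominating set $S$ of size $O(\log\Delta)\cdot|\mathrm{MDS}|$ in $2^{O(\sqrt{\log n\log\log n})}$ rounds. Since $|\mathrm{MDS}|\le |\mathrm{MCDS}|$, this already provides an $O(\log\Delta)$-approximation of the minimum connected dominating set in terms of cardinality, so it remains only to connect $S$ without asymptotically inflating its size. Concretely, I aim to add a set $T$ of size $O(|S|)$ of \emph{connector} nodes such that $G[S\cup T]$ is connected; this costs only a constant factor in the approximation ratio and yields a CDS of size $O(\log\Delta)\cdot|\mathrm{MCDS}|$.

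For the connector phase, I would follow Ghaffari's skeleton-based scheme, which on the auxiliary graph $H$ whose vertex set is $S$ and whose edges join pairs of $S$-nodes at distance at most $3$ in $G$ finds a spanning connected subgraph using few short paths; this is what causes the constant-factor blow up from $|S|$ to $|S|+|T|$. The randomized primitives used by Ghaffari are (i) choosing a polylog-fraction random sample of ``seeds'', (ii) growing truncated BFS trees from them to form a skeleton, and (iii) joining the remaining small connected components through short paths. Each of these steps can be implemented in CONGEST in $\polylog n$ rounds with high probability, and their correctness analysis depends only on concentration of sums of indicator random variables over local neighborhoods. The critical observation is that one can redesign the sampling distributions so that all relevant bad events are determined by $O(\log n)$ local decisions and the analysis survives under $\polylog n$-wise independent randomness, exactly as done for the MDS algorithm sketched in \Cref{sec:nutshell} and similar to the treatment of analogous problems in~\cite{DISC17_MIS,DISC18_DomSet,newHypergraphMatching}.

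Having brought the procedure into this ``limited-independence'' form, I derandomize it via the method of conditional expectations on top of a network decomposition of a small constant power of $G$, computed using the \CONGEST algorithm of~\cite{DISC18_DomSet} in $2^{O(\sqrt{\log n\log\log n})}$ rounds. Within each color class of the decomposition, clusters are processed in parallel; inside a cluster, a seed of the polynomial-sized sample space is chosen bit by bit, at each step selecting the bit that does not increase the conditional expected number of bad events. The key point is that every such conditional expectation decomposes into contributions that each cluster can evaluate locally using messages of size $O(\log n)$, since only $O(\log n)$ bits of the seed are ever relevant to any single bad event and each node needs only to aggregate counts from a constant-radius neighborhood. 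Once all seeds are fixed, the deterministic skeleton is computed in $\polylog n$ further \CONGEST rounds, the connector set $T$ is announced, and $S\cup T$ is output.

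The main obstacle, and the reason a network decomposition of $G^c$ (rather than of $G$) is needed, is executing Ghaffari's skeleton step in \CONGEST while keeping congestion under control: the auxiliary graph $H$ has long-distance edges, and naively routing BFS explorations on $H$ could cause superpolylogarithmic congestion at bottleneck nodes. Here I would reuse the ideas of~\cite{ghaffari14_CDS}: bound the number of concurrent explorations crossing any single edge of $G$ by the skeleton's sparsity, and pipeline them using standard \CONGEST scheduling, so that each phase still fits into $\polylog n$ rounds. The dominating-set construction of \Cref{thm:mainThmN} being the slowest ingredient, and the derandomization and the decomposition both running within its budget, the overall round complexity remains $2^{O(\sqrt{\log n\log\log n})}$.
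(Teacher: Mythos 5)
Your first phase (use \Cref{thm:mainThmN} to get $S$, then connect $S$ with $O(|S|)$ extra nodes using \Cref{claim:basicCDS}-style distance-$3$ edges) matches the paper, but the heart of your argument --- that Ghaffari's skeleton construction \cite{ghaffari14_CDS}, which is an adaptation of Pettie's linear-size skeleton \cite{pettie10}, can be ``redesigned so that all relevant bad events are determined by $O(\log n)$ local decisions,'' survives $\polylog n$-wise independence, and then falls to conditional expectations over a network decomposition --- is asserted rather than argued, and it is exactly the step the paper deliberately avoids. You give no description of what the bad events are, why their conditional expectations (or a pessimistic estimator for them) can be evaluated and aggregated inside clusters with $O(\log n)$-bit messages, or why the skeleton's size/connectivity analysis tolerates limited independence; for the MDS rounding these facts required the specific structure of the coverage constraints and a Chernoff bound for $k$-wise independent variables, and nothing analogous is established for the skeleton. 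As it stands, this is a genuine gap: the claim ``derandomize (ii)--(iii) as in \Cref{sec:nutshell}'' is the theorem, not a proof of it.

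The paper takes a different and more elementary route around this obstacle. It first shrinks the instance: using a \CONGEST ruling-set algorithm \cite{awerbuch89,CONGEST_rulingsets} it picks $S'\subseteq S$ with $|S'|\leq |S|/(c\log^2 n)$ and grows BFS-type cluster trees (of radius $O(\log^3 n)$, using $O(|S|)$ nodes in total) around $S'$, obtaining a cluster graph $G_S'$ on only $|S|/(c\log^2 n)$ super-nodes (\Cref{lemma:CDSclustering}). On $G_S'$ it then runs the \emph{already derandomized} Baswana--Sen spanner of \cite{DISC18_DomSet,baswana07}; that algorithm wastes an $O(\log^2 n)$ factor in the number of edges, but this is exactly absorbed by the preceding $\log^2 n$-factor size reduction, so only $O(\eps|S|)$ connector nodes are added. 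The only ingredient borrowed from \cite{ghaffari14_CDS} is the combinatorial path-selection scheme guaranteeing that the length-$\leq 3$ paths realizing $G_S$-edges can be chosen so each edge of $G$ lies on at most $2$ of them, which makes simulating the spanner algorithm on $G_S'$ possible with $O(\log n)$ overhead in \CONGEST; no new derandomization of a skeleton/sampling procedure is needed. If you want to rescue your route, you would have to actually carry out the limited-independence analysis and the cluster-local computation of conditional expectations for the skeleton algorithm; otherwise the reduction-plus-deterministic-spanner argument is the way the theorem is proved.
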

 
As for the basic MDS problem, an improved deterministic network decomposition
algorithm would directly also lead to an improved deterministic
algorithm for the minimum CDS problem.

\subsection{Our Dominating Set Algorithm in a Nutshell}
\label{sec:nutshell}
\paragraph{The general algorithmic framework:} Our algorithms start with a given fractional dominating set solution, which is $1/\poly\Delta$-fractional\footnote{In a \emph{fractional dominating set} each node $u$ is assigned a value $x_u\in[0,1]$ and each node $v\in V$ needs to be covered, that is, the sum $\sum_{u\in N(v)}x_u$ is at least $1$ where $N(v)$ denotes the inclusive neighborhood of $v$. We say that a fractional dominating set solution is $\lambda$-fractional if each non-zero fractional value is $\geq \lambda$. }  and can be computed efficiently in the \CONGEST model by using the algorithm of \cite{Kuhn06}. This fractional solution is then gradually and iteratively rounded to an integral solution. To explain a high-level view on the algorithm, let us fix some $\eps=1/\poly\log\Delta$.  The rounding consist of two parts. First, in $O(\log\Delta)$ phases, we roughly double the fractionality of the fractional solution in each phase until we obtain a $1/F$-fractional dominating set for some $F=\poly\log\Delta$. In each of these iterations, the dominating set might increase by a factor $1+O(\eps/\log \Delta)$ so that over all $O(\log \Delta)$ rounding steps, we only lose a $(1+O(\eps))$-factor.  Once we have a $1/F$-fractional dominating set, we round it into an integral dominating set with a single rounding step. During this step we essentially lose a $\ln (\Delta+1)$ factor in the approximation guarantee.  The rounding procedures in both parts are based on the derandomization of simple one-round randomized rounding primitives; derandomizing them in the \CONGEST model is one of the main technical contributions of this paper. 

\paragraph{Derandomization with network decomposition (runtime as a function of $n$):} In order to implement the approach based on a given network decomposition, we show that the above algorithm can also be implemented if the nodes only have $\polylog n$-wise independent bits. One can then share a seed of $\polylog n$ random bits in each cluster of the decomposition, which allows to create sufficiently many $\polylog n$-wise independent random bits. Finally, one can then use the method of conditional expectation inside each cluster to deterministically determine a seed such that the dominating set is small if we execute the rounding process  with that seed.

\paragraph{Derandomization with graph colorings (runtime as a function of $\Delta$):} In the original randomized rounding  process (with truly independent random bits), nodes at distance at least $3$ act independently. By applying the method of conditional expectation, one can derandomize the rounding algorithm as long as the actions of nodes at distance at most $2$ are carried out consecutively (see also \cite{newHypergraphMatching}). If a vertex coloring of the $G^2$ is given, this leads to an algorithm with a running time that is linear in the number of colors of the given $G^2$-coloring. This directly leads to an algorithm, with a running time of essentially $O(\Delta^2)$, the maximum degree of $G^2$. The main challenge in this part of the paper is to substitute one of the $\Delta$ factors by a $\polylog \Delta$ factor. The main idea  is to substitute $G$ with its \emph{bipartite representation} $B_G$, where each node is split into a \emph{constraint node} and a \emph{value node} (see \Cref{sec:derandomizationD} for more details). To reduce the number of colors we carefully split the constraint nodes of $B_G$ into nodes with much smaller degrees such that it suffices to compute a $O(\Delta\polylog \Delta)$-coloring of the square of this bipartite graph.

\paragraph{Why do we not just have one iteration of rounding?} Our algorithms first double the fractionality for $O(\log \Delta)$ iterations and in the last step, if the fractionality is at least $1/\polylog \Delta$ rounds to an integral solution in a single iteration. Interestingly, this iterative rounding approach is helpful in both cases, however, for different reasons: When we express the runtime as a function of $n$ we need to round in small steps to make sure that $\polylog n$-wise independent random bits are sufficient. When we express runtime as a function of $\Delta$, we can only split the constraint nodes of the bipartite representation of the problem into small enough parts and still obtain suffcient concentration in the randomized rounding if we round gradually.

\subsection{Outline}
In \Cref{sec:model} we formally introduce the model, fractional dominating sets and recall that the latter can be computed efficiently with the algorithm from \cite{Kuhn06}. \Cref{sec:domSetapprox} is the technically most involved part of the paper and it is split into four subsections. In \Cref{sec:abstractRounding} we introduce the abstract randomized rounding primitive which takes a fractional dominating set and a "rounding probability" for each node as input. Then, the randomized algorithm "rounds" the fractional values of the dominating set with these probabilities to increase the fractionality of the dominating set. Our main algorithm that is presented in \Cref{sec:proofs} uses instantiations of the rounding primitive to round a fractional dominating set into an integral dominating set.
 However, to ensure that our main algorithm is deterministic we show in \Cref{sec:derandomizationN,sec:derandomizationD} that, if the probabilities are chosen in the right way, we can derandomize this simple randomized algorithm efficiently. While  \Cref{sec:derandomizationN} tries to optimize the runtime of the derandomization process as a function of the number of nodes in the network, \Cref{sec:derandomizationD} tries to optimize the runtime as a function of the maximum degree. Due to the different objective both sections greatly differ in how the abstract rounding primitive is derandomized. 
In \Cref{sec:CDS} we show how the computed dominating sets can be transformed into connected dominating sets.
In \Cref{sec:conclusions} discuss how our results can be generalized.
%%% Local Variables:
%%% mode: latex
%%% TeX-master: "main"
%%% End:

\section{Model \& Notation}
\label{sec:model}
\paragraph{The \LOCAL Model of Distributed Computing~\cite{linial92, peleg2000distributed}.} The graph is abstracted as an $n$-node network $G=(V, E)$ with maximum degree at most $\Delta$ and each node has a unique identifier. Communications happen in synchronous rounds. Per round, each node can send one (unbounded size) message to each of its neighbors. At the end, each node should know its own part of the output, e.g., whether it belongs to a dominating set or not.
\paragraph{The \CONGEST Model~\cite{peleg2000distributed}.} In almost all parts of the paper we use the more restricted \CONGEST model which is defined like the \LOCAL model with the restriction that messages are of size $O(\log n)$. So messages can fit, e.g., a constant number of node identifiers. The most complicated computations that nodes need to perform in our algorithms are the computations of certain conditional probabilities. 

We say that a value in $[0,1]$ is \CONGEST \emph{transmittable} if it is the multiple of $2^{-\iota}$ where $\iota$ is the smallest integer such that $2^{-\iota}\leq 1/n^{10}$. During our algorithms we merely require that a value fits into a single message in the \CONGEST model and additionally we require that a biased coin that equals $1$ with a transmittable probability can be built with polylogarithmically many fair coins. 
\paragraph{Graph Notation.}
Given a graph $G=(V,E)$ we use $N(v)$ to denote the inclusive neighborhood of a node $v\in V$, $\Deltab=\max_{v\in V}|N(v)|=\Delta+1$ for the size of the inclusive neighborhood and $\Delta$ for the maximum degree of the graph.
We denote the shortest hop distance between two nodes $u,v\in V$ as $d_G(u,v)$. To simplify calculations we assume that $\eps\leq 1$ throughout all results.
For our algorithms we need to generalize the notion of a dominating set.

\begin{definition}[Constraint Fractional Dominating Set]
  Given a graph $G=(V,E)$ and a function $c: V \rightarrow [0,1]$ assigning constraints and a function $x: V \rightarrow [0,1]$ assigning fractional values, we call $(x,c)$ a \emph{constrained fractional dominating (CFDS)} set if $\forall v \in V : \sum_{u \in N(v)}x(u) \geq c(v)$. We furthermore call $\sum_{v\in V} x(v)$ the size of the CFDS. If all constraints equal $1$ for all nodes and $(x,c)$ is a constrained fractional dominating set we also call $x$ a \emph{fractional dominating set (FDS)}. 
\end{definition}
If the fractional values equal $1$ or $0$ for all nodes and $x$ is a FDS then $x$ is a dominating set (DS) in the classical sense. 
%We use the term the \emph{value} of $v$ to refer to $x(v)$ and the \emph{constraint} of $v$ to refer to $c(v)$. 
All of our algorithms start with a fractional dominating set that already is a good approximation but has a bad fractionality. Computing such an initial fractional dominating set can be done by slightly adapting an algorithm from \cite{Kuhn06}. 
%The proof of \Cref{lem:fractionalApproximation} as well as all other missing proofs of the paper can be found in the appendix.
\begin{restatable}[\cite{Kuhn06}]{lemma}{lemFractionalApproximation}
\label{lem:fractionalApproximation}
For any $\eps> 0$ there is a deterministic \CONGEST algorithm that computes a $(1+\eps)$-approximation for MDS that is $\eps/(2\Delta)$-fractional and has runtime $O(\eps^{-4}\log^2\Delta)$.
\end{restatable}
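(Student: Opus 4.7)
The plan is to invoke almost verbatim the deterministic distributed primal--dual fractional LP solver of Kuhn, Moscibroda, and Wattenhofer~\cite{Kuhn06}, which, specialized to the MDS LP, computes a $(1+\eps)$-approximate fractional dominating set in $O(\eps^{-4}\log^2 \Delta)$ rounds of \CONGEST. Their algorithm runs in phases in which each node adjusts its fractional value through multiplicative updates based on sums over its local neighborhood; since the only values exchanged are local sums and bucket indices, the algorithm already fits the \CONGEST bandwidth constraint with \CONGEST transmittable probabilities.

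The only feature not provided directly by the KMW result is the $\eps/(2\Delta)$-fractionality. The plan is to enforce it by a purely local post-processing step: set $x'(v) := \max\{x(v),\,\eps/(2\Delta)\}$ at every node $v$. Then $x'$ is componentwise at least $x$ and hence still a fractional dominating set, and all non-zero values are at least $\eps/(2\Delta)$, so $x'$ is $\eps/(2\Delta)$-fractional. The size of $x'$ exceeds that of $x$ by at most $n\cdot\eps/(2\Delta)$.

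To absorb this additive overhead into the approximation ratio, I would invoke the standard LP lower bound $\mathrm{OPT}_{\text{LP}} \ge n/(\Delta+1)$ (summing the $n$ covering constraints yields $(\Delta+1)\sum_v x(v) \ge n$), which implies $n\cdot\eps/(2\Delta) \le O(\eps)\,\mathrm{OPT}_{\text{LP}}$ for $\Delta \ge 1$. Running KMW with its parameter scaled to $\Theta(\eps)$ instead of $\eps$ then absorbs both the KMW error and the post-processing overhead into a combined $(1+\eps)$ factor while leaving the runtime at $O(\eps^{-4}\log^2 \Delta)$.

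The main obstacle is not conceptual—the heavy lifting is entirely in the KMW algorithm—but rather the constant juggling required to simultaneously hit the stated approximation factor $(1+\eps)$ and fractionality $\eps/(2\Delta)$; this is a routine rescaling exercise that does not affect the asymptotic runtime.
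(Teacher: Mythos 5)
Your proposal matches the paper's proof essentially verbatim: run the KMW fractional solver with its parameter scaled down (the paper uses $\eps'=\eps/2$), raise all small values up to $\eps/(2\Delta)$, and absorb the additive $n\cdot\eps/(2\Delta)$ overhead using the lower bound of roughly $n/\Delta$ on the optimum. The argument and the constant-rescaling step are exactly what the paper does, so the proposal is correct and not a different route.
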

\begin{proof}
First use the algorithm from \cite{Kuhn06} with $\eps'=\eps/2$ to compute a $(1+\eps)$-approximation for MDS. Then, each node with value $<\eps/(2\Delta)$ sets its value to $\eps/(2\Delta)$. As the size of an optimal dominating can be lower bounded by  $n/\Delta$  we obtain a $(1+\eps)$ approximation.
\end{proof}

\section{Dominating Set Approximation}
\label{sec:domSetapprox}
In this section we provide two deterministic algorithms to approximate the minimum dominating set problem in the \CONGEST model. The runtime of the first algorithm (cf. \Cref{thm:mainThmN}) depends on the number of nodes $n$ of the graph and the runtime of the second algorithm (cf. \Cref{thm:mainThmDelta}) mostly depends on the maximum degree $\Delta$ of the graph. Both algorithms rely on deterministic rounding schemes that are obtained through derandomizing instances of a more general abstract randomized rounding process. We use \Cref{sec:abstractRounding} to define this abstract rounding process, \Cref{sec:derandomizationN} to derandomize the process while optimizing dependency on $n$ and \Cref{sec:derandomizationD} to derandomize the process while optimizing the dependency on $\Delta$. In \Cref{sec:proofs} the rounding schemes are combined to prove our main theorems.

\subsection{Abstract Randomized Rounding}
\label{sec:abstractRounding}
The goal of the abstract randomized rounding process is to transform a constraint fractional dominating set into a new CFDS with better fractionality while only slightly increasing its size.
%Throughout this section we assume that random bits are $k$-independent where $k=?$. 

\paragraph{Abstract Randomized Rounding Process:}
Assume a given (initial) constraint fractional  dominating set with values $x(v), v\in V$. 
Further, each node has a probability $p(v)\geq x(v)$ that might depend on $x(v)$. The rounding process is parameterized by the choice of $p(v)$---we will later essentially consider two variants of the process with different $p(v)$'s.
 The rounding process consists of two phases and the random variable $X_v$ denotes $v$'s value after the first phase.  
 In the first phase each node $v$ updates its value as follows
\begin{align*}
X_v = \left\{\begin{array}{lr}
        x(v)/p(v), & \text{with probability $p(v)$}\\
        0, & \text{with probability $1-p(v)$}. 
        \end{array}\right. 
\end{align*}

In the second phase of the rounding process all nodes whose constraints are not satisfied join the dominating set, i.e., they set their values to $1$.

\medskip 

For node $v$ let $E_v$ be the random variable that equals $1$ if node $v$ is uncovered after the first phase of the algorithm and $0$ otherwise.

\begin{lemma} \label{lem:abstractRoundingWithConstraints}

\label{lem:abstractRounding}
Let $A$ be an upper bound on the size of the initial constraint fractional dominating set. Then the following statements hold for the abstract randomized rounding process.
\begin{enumerate}
\item Its output is a constraint fractional dominating set and its fractionality is $\min_{v\in V}\{x(v)/p(v)\}$.
\item Its output has, in expectation, size $A+\sum_{v\in V}\Pr(E_v=1)$.
\end{enumerate}
\end{lemma}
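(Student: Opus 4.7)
The plan is to establish the two parts in turn, both by a direct analysis of the two phases together with linearity of expectation.

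For part (1), I would first observe that after phase 1 every node's value lies in $\{0, x(v)/p(v)\}$, and since $p(v) \geq x(v)$ by assumption, the nonzero phase-1 values are at most $1$. Phase 2 can only promote a value to $1$, which does not spoil the minimum-nonzero bound of $\min_{v\in V}\{x(v)/p(v)\}$; hence the fractionality claim follows. To check that the output is a valid CFDS, I would split into two cases: a node whose constraint is already satisfied after phase 1 remains satisfied because phase 2 only increases values, and a node whose constraint fails after phase 1 is itself promoted to $1$ in phase 2, which trivially satisfies $\sum_{u\in N(v)} X'_u \geq 1 \geq c(v)$, where $X'_u$ denotes the final value at $u$.

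For part (2), I would compute the phase-1 expectation $\E[X_v] = p(v)\cdot x(v)/p(v) = x(v)$, so by linearity the expected total after phase 1 equals $\sum_{v\in V} x(v) \leq A$. In phase 2, an uncovered node $v$ raises its value from $X_v$ to $1$, contributing an additional $1-X_v \leq 1$ to the size, while a covered node contributes nothing extra. Taking expectations, the additional contribution of $v$ is at most $\Pr(E_v=1)$, and summing over all $v$ yields $\E[\text{final size}] \leq A + \sum_{v\in V}\Pr(E_v=1)$.

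The argument is essentially routine; the only subtlety to track is that when an uncovered node already carries a positive phase-1 value $x(v)/p(v)$, phase 2 contributes only $1-X_v$ rather than a full $1$. This is precisely why the stated expression in part (2) should be read as an upper bound, and it is still controlled by $\Pr(E_v=1)$ since $1-X_v \leq 1$.
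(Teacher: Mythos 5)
Your proof is correct and follows essentially the same route as the paper: linearity of expectation gives the phase-one contribution $\sum_v x(v)\leq A$ and the phase-two contribution $\sum_v \Pr(E_v=1)$, and the CFDS/fractionality claims follow from the case analysis that final values are $0$, $x(v)/p(v)$, or $1$ with constraints at most $1$. Your remark that the bound should be read as an upper bound (since an uncovered node adds only $1-X_v\leq 1$) matches the paper's use of $Z\leq Z_1+Z_2$, so no further changes are needed.
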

\begin{proof}
\begin{enumerate}
  \item[]
\item The output is a constraint fractional dominating set as all nodes that are uncovered after phase one join the dominating set with value $1$ in phase two and constraints can be at most $1$. As each node either sets its value to $0$, to $x(v)/p(v)$ or to $1$ the fractionality can be lower bounded by  $\min_{v\in V}\{x(v)/p(v)\}$.
\item Let $Z_1$ be the size of the dominating set after the first phase and let $Z_2$ be the number of nodes that join the dominating set in phase two. The dominating set size $Z$ after phase two is upper bounded by $Z_1+Z_2$ and we have 
\end{enumerate}
\begin{align*}
  \E[Z]& \leq \E[Z_1]+\E[Z_2]=\sum_{v\in V}p(v)\cdot x(v)/p(v) +  \sum_{v\in V}~\Pr(E_v=1) \\
	& =A+ \sum_{v\in V}~\Pr(E_v=1)~. 
\end{align*}
\end{proof}

%\clearpage

\subsection{Derandomization via Network Decompositions}
\label{sec:derandomizationN}
In this section we  derandomize instantiations of the aforementioned abstract randomized rounding process with the help of $k$-hop network decompositions. \cite{DISC18_DomSet} shows how these can be computed in the \CONGEST model and we restate their definitions and results.
%The first algorithms for network decompositions in the \LOCAL model  are presented in \cite{awerbuch85,linial93}. 
 We begin with the notion of a cluster graph.
\begin{definition}[Cluster Graph]Given a graph $G=(V,E)$ and an integer $d\geq 1$ an $(N,d)$-cluster graph $\mathcal{G}=(\mathcal{V},\mathcal{E})$ is a graph that is given by a set of clusters $\mathcal{V}=\{\CC_1,\ldots,\CC_N\}\in 2^V$ such that 
\begin{enumerate}
\item  the clusters form a partition of $V$, 
\item each cluster $\mathcal{C}_i$ induces a connected subgraph $G[\CC_i]$ of $G$,
\item each cluster $\CC_i$ has a leader $\ell(\CC_i)$ that is known by all nodes of $\CC_i$,
\item  inside each cluster, there is a rooted spanning tree $T(\CC_i)$ of $G[\CC_i]$ that is rooted at $\ell(\CC_i)$ and has diameter at most $d$. 
\end{enumerate}
There is
an edge $\{\CC_i,\CC_j\}$ between two clusters $\CC_i,\CC_j \in\mathcal{V}$ if there is edge in $G$ connecting a node in
$\CC_i$ to a node in $\CC_j$ . The identifier $ID(\CC_i)$ of a cluster $\CC_i$ is its leader’s ID.
\end{definition}
Given a cluster graph $\mathcal{G} = (\mathcal{V}, \mathcal{E})$ of $G$ and an integer $k \geq 1$, we say that two clusters
$\CC,\CC'\in \mathcal{V}$ are $k$-separated if for any two nodes $u$ and $v$ of $G$ such that $u\in \CC$ and  $v \in \CC'$,
we have $d_G(u, v) > k$. A strong-diameter $k$-hop network decomposition of a graph $G$ is then
defined as follows.
\begin{definition}[Network Decomposition] Let $G = (V,E)$ be a graph and let $k \geq 1, d \geq 0$,
and $c\geq 1$ be integer parameters. A strong diameter $k$-hop $(d, c)$-decomposition of $G$ is a
$(N, d)$-cluster graph $\mathcal{G}$ of $G$ for some integer $N \geq 1$ together with a coloring of the clusters of
$G$ with colors $\{1, . . . , c\}$ such that any two clusters with the same color are $k$-separated.
\end{definition}
\cite{DISC18_DomSet} shows how to compute $k$-hop network decompositions in the \CONGEST model.
\begin{theorem}[Theorem 3 in \cite{DISC18_DomSet}]
\label{thm:networkDecomp}
Let $G = (V,E)$ be an $n$-node graph and let $k \geq 1$ be an integer. There
is a deterministic \CONGEST model algorithm that computes a strong diameter $k$-hop $(k \cdot
f(n), f(n))$-decomposition of G in $k \cdot f(n)$ rounds, where $f(n) = 2^{O(\sqrt{\log n\cdot\log \log n})}$.
\end{theorem}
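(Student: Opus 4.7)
The plan is to adapt the deterministic Linial--Saks ball-carving scheme to produce $k$-hop separated clusters and to run in the \CONGEST model. I would build the decomposition one color at a time: for each of the $c = f(n)$ colors $i = 1, \ldots, c$, I carve out a color class from the currently uncolored nodes by running a parallel ball-growing procedure. Candidate leaders grow BFS balls in $k$-hop increments; a leader stops at the first radius where the ball fails to grow by more than some factor $\gamma$, taken to be roughly $2^{\sqrt{\log n / \log\log n}}$. Since balls have size at most $n$, this halts after $O(\sqrt{\log n \log\log n})$ increments, giving a cluster of strong diameter at most $k \cdot 2^{O(\sqrt{\log n \log\log n})} = k \cdot f(n)$. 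At the stopping radius the outermost $k$-shell is sparse, which is the geometric property needed to ensure that no other same-color leader sits in that shell and hence that same-color clusters end up $k$-separated in $G$.

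To guarantee that $c = f(n)$ colors suffice I would use a standard potential argument: each color class removes at least a $1/c$ fraction of the currently uncolored nodes, so that after $c$ iterations nothing is left. Deterministically selecting a leader set of the right density while maintaining $k$-separation among same-color clusters is the combinatorial heart of the argument; I would implement it by derandomizing a natural randomized leader-selection rule via the method of conditional expectations, where the random seed each node needs to read is only $O(\log n)$ bits that the leader of a previously built ball can fix for its entire ball and then broadcast along the tree. A careful amortization across color classes, and across well-separated leaders within the same color, yields the claimed total bound of $O(k \cdot f(n))$ rounds.

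The main obstacle is the \CONGEST implementation. In \LOCAL a leader can flood arbitrary information through its BFS ball at no cost per edge, but in \CONGEST every edge carries only $O(\log n)$ bits per round, and many simultaneously growing BFS balls can cause severe congestion on shared edges. To handle this I would pipeline all communication along the already-committed parts of cluster spanning trees rather than along arbitrary edges of $G$: once a node has joined a cluster, its tree edge to the parent becomes dedicated, so no congestion arises between competing leaders. The $k$-shell visibility checks needed to decide whether a ball should continue growing can then be executed by a single BFS wave along each tree in $O(k \cdot f(n))$ rounds per color. The extra $\sqrt{\log\log n}$ factor inside the exponent, compared to the \LOCAL-model bound of $2^{O(\sqrt{\log n})}$, is precisely the overhead of this pipelined scheduling, and making all the accounting balance simultaneously with the derandomization of the leader selection is where I expect the real technical difficulty to lie.
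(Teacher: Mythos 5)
First, note that the paper you are working from does not prove this statement at all: it is imported verbatim as Theorem~3 of \cite{DISC18_DomSet} and used as a black box, so there is no in-paper proof to compare against; what you have written is an attempt to reprove the cited result of Ghaffari and Kuhn.

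As a proof, your sketch has genuine gaps precisely at the points that constitute the technical core of that result. (i) The $k$-separation of same-color clusters is asserted, not established: the sparse-shell condition in Linial--Saks-type ball carving controls how many nodes are left uncolored (and hence the number of colors), but it does not by itself prevent two clusters carved for the same color from ending up within distance $k$ of each other. Guaranteeing $k$-separation when many leaders carve simultaneously amounts to coordinating decisions across $G^k$, and in \CONGEST one cannot simply simulate $G^k$ (for $k=2$, as used in this paper, a single round of $G^2$ may require $\Theta(\Delta)$ rounds of $G$); you never say how competing leaders within distance $k$ detect each other or resolve conflicts with $O(\log n)$-bit messages. (ii) The derandomization is also only asserted: to use the method of conditional expectations with an $O(\log n)$- or $\polylog n$-bit seed you must first show that the randomized leader-selection/carving process still works under correspondingly bounded independence, and you must exhibit a pessimistic estimator whose conditional expectations the relevant nodes can compute and aggregate in \CONGEST; neither is sketched, and this is exactly what \cite{DISC18_DomSet} has to do. (iii) The congestion argument is not sound as stated: dedicating tree edges helps only \emph{inside} already-committed clusters, whereas the expensive communication happens through uncolored nodes in the $k$-hop shells, which can be reachable from many competing leaders at once and must forward information for all of them; bounding or scheduling this traffic is where the extra $\sqrt{\log\log n}$ in the exponent actually comes from, and "no congestion arises between competing leaders" does not address it. (iv) A smaller arithmetic slip: removing a $1/c$ fraction of the uncolored nodes per color class empties the graph after $\Theta(c\log n)$ colors, not $c$; this is absorbed by the $2^{O(\sqrt{\log n\log\log n})}$ bound, but the statement as written is incorrect. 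In short, the high-level plan (derandomized ball carving plus pipelining on cluster trees) is in the right spirit, but the steps that make the theorem true in \CONGEST --- $k$-separation with small messages, the bounded-independence analysis with a computable estimator, and the congestion accounting --- are exactly the ones left unproven.
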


%Thus we need the  classic construction of $k$-wise independent random bits. 
In the proof of our main derandomization lemma we define a process that only needs $\polylog n$-wise independent random bits. 
A finite set $\mathcal{X}$ of biased coins is \emph{$k$-wise independent} if any subset $T\subseteq \mathcal{X}$ of size $k$ is independent in the classical sense. We use the following classic result, which is for example proven in \cite{alon2004probabilistic}.

\begin{lemma}\label{lem:randomSeedCreation}
Let $N$ and $k$ be integers and $p_1,\ldots,p_N\in[0,1]$ multiplies of $2^{-\iota}$ for some $\iota=O(\log N)$. Given a random seed consisting of $K=O(k\log^2 N)$ independent fair coins, one can efficiently and deterministically extract $N$ biased coins with probabilities $p_1,\ldots,p_N$ from the seed and these coins are $k$-wise independent. 
\end{lemma}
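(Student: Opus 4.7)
The plan is to combine two classical ingredients: a standard construction of $k$-wise independent fair bits from a small seed, and a simple bit-by-bit simulation of each biased coin from a block of fair coins. Let $\iota = O(\log N)$ be as in the statement.

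First, recall the standard construction (see, e.g., Alon--Spencer) of $k$-wise independent \emph{fair} bits: from a seed of $m$ truly uniform bits one can deterministically and efficiently produce $M$ bits that are $k$-wise independent, with $m = O(k\log M)$. Concretely, one picks a random degree-$(k-1)$ polynomial over $\mathbb{F}_{2^{\lceil \log M\rceil}}$, evaluates it at $M$ distinct points, and extracts one bit from each evaluation. I would apply this with $M = N\iota$ and target independence $k' = k\iota$, giving a seed of length $O(k\iota \cdot \log(N\iota)) = O(k\log^2 N)$, which matches the claimed $K$.

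Second, for each index $i\in\{1,\ldots,N\}$ I would dedicate a block of $\iota$ of the fair coins, say $b_{i,1},\ldots,b_{i,\iota}$, and form the dyadic rational $y_i = \sum_{j=1}^{\iota} b_{i,j} 2^{-j} \in \{0,2^{-\iota},\ldots,1-2^{-\iota}\}$, which is uniform over multiples of $2^{-\iota}$ in $[0,1)$. The biased coin $c_i$ is then defined as $c_i = 1$ if $y_i < p_i$ and $c_i = 0$ otherwise. Because $p_i$ is a multiple of $2^{-\iota}$, exactly $p_i\cdot 2^{\iota}$ of the $2^{\iota}$ equally likely values of $y_i$ fall below $p_i$, so $\Pr[c_i=1] = p_i$ as required.

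Finally, to verify $k$-wise independence of $c_1,\ldots,c_N$: any set of $k$ of these coins is a deterministic function of $k\iota$ of the underlying fair bits, and those $k\iota$ fair bits are mutually independent by our choice of $k' = k\iota$ in the first step. Hence the joint distribution of any $k$ of the $c_i$'s factorizes, which is precisely $k$-wise independence. The main technical point is essentially bookkeeping — choosing the independence parameter for the fair coins large enough ($k\iota$ rather than $k$) so that independence survives the per-index simulation, and checking that the seed length remains $O(k\log^2 N)$; I do not foresee a genuine obstacle beyond correctly invoking the polynomial-hashing construction.
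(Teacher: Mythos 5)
Your construction is correct, and it is essentially the standard argument behind this statement: the paper itself gives no proof and simply cites it as a classic result (e.g.\ \cite{alon2004probabilistic}), whose underlying construction is exactly the polynomial-based $k'$-wise independent fair bits you invoke, combined with the dyadic thresholding $c_i=\mathbf{1}\{y_i<p_i\}$; your amplification to $k'=k\iota$-wise independence correctly handles both the within-block uniformity of each $y_i$ (since $\iota\le k\iota$) and the joint independence of any $k$ coins, and the seed length $O(k\iota\log(N\iota))=O(k\log^2 N)$ matches the claim. As a minor aside, one could avoid the detour through fair bits by taking $k$-wise independent uniform elements of a field of size $\max(2^{\iota},N)\le\poly(N)$ and thresholding each directly, which yields seed length $O(k\log N)$, but the stated bound of $O(k\log^2 N)$ makes your route entirely sufficient.
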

We can now state our main derandomization lemma with the help of network decompositions.
\begin{restatable}[Derandomization Lemma I]{lemma}{lemDerandomization}
\label{lem:derandomization}
There is a determ. \CONGEST algorithm that, given 
\begin{itemize}
\item a constraint fractional dominating set of size $A$ with transmittable values $x(v), v\in V$,
\item a transmittable probability $p(v)\geq x(v)$ for each $v\in V$ and
\item a sufficiently large parameter $k=\polylog n$,
\end{itemize} computes a constraint fractional dominating set with fractionality $\min_{v\in V}\{ x(v)/p(v)\}$ and size at most $A+\sum_{v\in V}\Pr(E_v=1) +\frac{1}{n^6}$ where $E_v$ is the event that $v$ is uncovered after the first phase if the abstract randomized rounding process is executed with $x(v)$ and $p(v)$ for $v\in V$ and $k$-wise independent (biased) coins for the probabilities $p(v)$. 
 The round complexity of the algorithm is  $2^{O(\sqrt{\log n \log \log n})}$~. 
\end{restatable}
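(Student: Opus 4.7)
The plan is to combine a $2$-hop network decomposition with per-cluster pseudorandom seeds and the method of conditional expectations, processing color classes sequentially.

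First, I would invoke \Cref{thm:networkDecomp} with parameter $k=2$ to obtain a strong-diameter $2$-hop $(d,c)$-decomposition with $d,c = f(n) = 2^{O(\sqrt{\log n\log\log n})}$. Each cluster $C$ will use its own independent random seed of length $K = O(k\log^2 n) = \polylog n$ fair coins; by \Cref{lem:randomSeedCreation}, this seed can be deterministically unfolded into $|C|$ biased coins with the prescribed transmittable probabilities $p(v)$, $v\in C$, in a $k$-wise independent manner. Since different clusters use independent seeds, the resulting global coin vector is $k$-wise independent, matching the hypothesis on the coins in the lemma statement.

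The key structural fact afforded by $2$-separation is that for any two clusters $C\neq C'$ of the same color, $d_G(u,v)>2$ for all $u\in C$, $v\in C'$, hence the sets $N[C]:=\{v\in V : N(v)\cap C\neq\emptyset\}$ are pairwise disjoint across same-color clusters and every $v\in N[C]$ satisfies $N(v)\cap C'=\emptyset$ for every other color-$i$ cluster $C'$. Consequently the event $E_v$ of any node $v$ is influenced by the seed of at most one cluster per color, so clusters of a single color can derandomize in parallel without interfering with each other's summands. I would then process the colors in sequence $i=1,\ldots,c$. At color $i$ each cluster $C$ fixes its $K$ seed bits one by one using the pessimistic estimator $\Phi_C := \sum_{v\in N[C]} \widehat{\Pr}(E_v \mid \text{bits fixed so far})$, where $\widehat{\Pr}(\cdot)$ is an explicit upper bound on $\Pr(E_v\mid\cdot)$ under the distribution in which all unprocessed cluster seeds remain uniformly random and independent. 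For each bit, the leader $\ell(C)$ aggregates the two candidate values of $\Phi_C$ through the cluster's spanning tree in $O(d)$ rounds, fixes the bit that yields the smaller value, and broadcasts the decision back. By the disjointness property above, different clusters of color $i$ modify disjoint summands of the global estimator $\Phi := \sum_v \widehat{\Pr}(E_v\mid\cdot)$, so $\Phi$ is nonincreasing throughout the entire procedure.

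After all $Kc$ bit-fixings, the coin vector is fully determined, phase two of the abstract rounding process is executed in a single round, and \Cref{lem:abstractRounding} yields a CFDS of fractionality $\min_v x(v)/p(v)$ and size at most $A + \Phi_{\text{final}} \leq A + \Phi_{\text{initial}} \leq A + \sum_v\Pr(E_v) + n^{-6}$, where the additive slack absorbs the discretization error introduced by working with $O(\log n)$-bit \CONGEST-transmittable representations of the $\widehat{\Pr}$ values (the transmittable granularity is already at most $n^{-10}$). The round complexity is $K \cdot c \cdot O(d)$ for the derandomization plus $2f(n)$ for the initial decomposition, all bounded by $2^{O(\sqrt{\log n\log\log n})}$. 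The main obstacle I expect is pinning down the pessimistic estimator $\widehat{\Pr}(E_v\mid\cdot)$: since $|N(v)|$ may exceed $k$, the exact value of $\Pr(E_v)$ is not determined by the $k$-wise marginals, so one must substitute a degree-$O(k)$ polynomial upper bound of Markov or $k$-th moment type whose partial evaluations decompose into sums of products of at most $k$ coin values that a single $O(d)$-round sweep of the cluster's spanning tree can aggregate using $O(\log n)$-bit messages.
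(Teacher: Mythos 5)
Your overall architecture (2-hop decomposition from \Cref{thm:networkDecomp}, per-cluster seeds of length $K=O(k\log^2 n)$ via \Cref{lem:randomSeedCreation}, sequential colors with parallel bit-fixing across same-color clusters justified by disjointness of the sets $N(\mathcal{C})$) matches the paper. However, there is a genuine gap in your choice of potential function. The quantity you must control is the size of the output CFDS, $\sum_v Z_v$, where $Z_v=X_v$ if $v$ is covered after phase one and $Z_v=1$ otherwise; by \Cref{lem:abstractRounding} its \emph{expectation} is at most $A+\sum_v\Pr(E_v=1)$, but $A$ only bounds $\E[\sum_v X_v]$, not the realized phase-one mass for the particular seed your derandomization selects. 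Your estimator $\Phi=\sum_v\widehat{\Pr}(E_v\mid\cdot)$ only controls the number of uncovered nodes; it gives no control whatsoever over $\sum_v X_v$. Concretely, in the factor-two rounding ($p(v)=1/2$) the ``all coins heads'' outcome is ideal for your $\Phi$ (everyone stays covered) yet yields $\sum_v X_v\approx 2A$, destroying the $(1+\eps)$-per-iteration loss the lemma is used for; in the one-shot rounding the realized $\sum_v X_v$ can exceed $\ln\Deltab\cdot A$ by far. The paper avoids this by running the method of conditional expectations on $\E[Z_v\mid\text{fixed bits}]$ itself (the quantities $\alpha_{v,0},\alpha_{v,1}$), which simultaneously accounts for the phase-one values and the phase-two penalty. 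Your proof would need to add the conditional expectation of $\sum_v X_v$ to the potential, at which point you are back to the paper's objective.

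A second, related problem is the pessimistic-estimator detour you flag as your ``main obstacle.'' It is both unnecessary and incompatible with the claimed bound. The distribution being derandomized is not ``some $k$-wise independent distribution known only through its marginals'': it is the fully specified distribution induced by uniform seeds and the deterministic extraction map of \Cref{lem:randomSeedCreation}, so each node can compute the \emph{exact} conditional expectation of $Z_v$ given the already-fixed seed bits (the model explicitly permits such local computation; values are only rounded to multiples of $n^{-10}$ for transmission, which is where the $n^{-6}$ slack comes from). Conversely, if $\widehat{\Pr}(E_v)$ were a genuine Markov/$k$-th-moment upper bound, there is no reason for $\Phi_{\mathrm{initial}}$ to be within $n^{-6}$ of $\sum_v\Pr(E_v=1)$, so even the coverage part of your size bound would not follow in the form stated by the lemma.
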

The proof of \Cref{lem:derandomization} is based on iterating through the clusters of a network decomposition (of $G^2$) and creating a random seed of $\polylog n$ length inside each cluster. This is sufficient to create $\polylog n$-wise independent bits for all nodes to execute the rounding algorithm. Then the algorithm is derandomized, i.e., bits of the random seeds are fixed deterministically with the method of conditional expectation.

\begin{proof}
We show that we can derandomize the aforementioned abstract randomized rounding algorithm in the \CONGEST model in $2^{O(\sqrt{\log n \log \log n})}$ rounds. In particular we derandomize the following \emph{'almost equivalent'} randomized process:
 Deterministically compute a $2$-hop $(d,c)$-network decomposition  with $d,c=2^{O(\sqrt{\log n \log \log n})}$ in $2^{O(\sqrt{\log n \log \log n})}$ rounds (cf. \Cref{thm:networkDecomp}). Then iterate through the $c$ color classes. In iteration $i$ each leader of all clusters with color $i$ creates a random seed of length $K$---the randomness for distinct clusters is independent---and distributes the seed to all members of the cluster. $K=\polylog n$ is chosen as in \Cref{lem:randomSeedCreation} such that each node $v$ of the cluster can extract a biased coin (from the random seed)  that equals one with probability $p(v)$ and zero otherwise and such that these coin flips are $k$--wise independent.  After all clusters have been processed all nodes execute both phases of the abstract randomized rounding algorithm with their respective random bits. Note that the only randomness used by the algorithm is the generation of the random seeds---one random seed, that is, $K$ random bits, per cluster. 
Due to linearity of expectation all results of \Cref{lem:abstractRounding} hold despite the limited dependence between coins and thus the process computes a fractional dominating set with fractionality $\min\{x(v)/p(v)\}$ and expected size $A+\sum_{v\in V}\Pr(E_v=1)$.

To derandomize the above algorithm we use the method of conditional expectation, that is, we fix the random bits in the above algorithm one after another where the expectation (over the randomness of all unset random bits) only increases marginally during each step. There are $K$ random bits per cluster and fixing a single bit takes $O(d)$ rounds---recall that $d$ is the maximum cluster diameter. We will show that random bits of distinct clusters in the same color class can be fixed simultaneously. The full deterministic algorithm has the following steps: 1) compute a $2$-hop $(d,c)$-network decomposition, 2) deterministically determine a random seed for each cluster, 3) deterministically execute the 'randomized algorithm' with these random seeds (using the same network decomposition).
The first step needs $2^{O(\sqrt{\log n \log \log n})}$ rounds, the second steps  $O(K\cdot c\cdot d)=2^{O(\sqrt{\log n \log \log n})}$ rounds and the third step can be executed in $O(1)$ rounds.

We now show how the random bits of the random seeds can be fixed.  Assume that clusters are ordered from low color to high color (and arbitrarily within the same color classes). Let $B_{i,j}$ denote the $j^{th}$ bit of the random seed of cluster $i$. We explain how to fix random bits in the order $B_{1,1},B_{1,2},\ldots, B_{1,K},B_{2,1},\ldots,B_{2,K},\ldots$, however it will turn out that random bits of distinct clusters of the same color can be fixed at the same time.
To reduce notation assume that bits are indexed with $1,2,3,\ldots$ and assume that we want to deterministically determine the $j^{th}$ bit. Let  $\mathcal{C}$ be the cluster of the $j^{th}$ bit and let $\ell$ be the cluster leader.  With constant overhead we can assume that each node $v$ always knows its neighbors'  $x(u)$ and $p_u$ values as well as its neighbors' IDs and the algorithm will be designed such that $v$ also knows all fixed random bits that influence the execution of the first phase of any node in $N(v)$.  Let bits $B_1,\ldots, B_{j-1}$ be already fixed as $b_1,\ldots,b_{j-1}$.
With this knowledge each node $v\in N(\mathcal{C})$ computes the following two values where the random variable $Z_v$ denotes the value of $v$ after the second phase of the process.\footnote{Here $N(\mathcal{C})$ denotes the nodes inside the cluster and nodes that have at least one neighbor in $\mathcal{C}$ in graph $G$.}
\begin{align*}
\alpha_{v,0} & =\E[Z_v \mid B_0=b_0, \ldots, B_{j-1}=b_{j-1}, B_j=0] \\ 
\alpha_{v,1} & =\E[Z_v  \mid B_0=b_0, \ldots, B_{j-1}=b_{j-1}, B_j=1]
\end{align*} 
Then node $v$ rounds these values with accuracy $1/n^{10}$ and obtains values $\tilde{\alpha}_{v,0}$ and $\tilde{\alpha}_{v,1}$, respectively.  Then, as the values are rounded to multiples of $1/n^{10}$, we can aggregate their respective sums at $\ell$ in $O(d)$ rounds using the spanning tree of the cluster; here, nodes that have a neighbor in the cluster but are not part of the cluster send their value to one neighbor in the cluster as if it was the parent in the spanning tree. 
Thus $\ell$ learns the two values $\sum_{v\in N(\mathcal{C})}\tilde{\alpha}_{v,0}$ and $\sum_{v\in N(\mathcal{C})}\tilde{\alpha}_{v,1}$. If the first one is smaller $\ell$ sets $b_j=0$ and otherwise it sets $b_j=1$. Then the decision is send to all nodes in $N(\mathcal{C})$ and we continue with the next bit. We now prove the following statements:

\noindent\textbf{Claim:} The expected size of the FDS increases at most by $2/n^9$ by fixing a single bit.
\renewcommand\qedsymbol{$\blacksquare$}
\begin{proof} To succinctly express conditional expectations with already fixed bits introduce the notation
$\theta=\bigwedge_{l=1}^{j-1}B_l=b_l$ . 
Formally we have
\begin{align*}
b_j& =\argmin_{x\in\{0,1\}}\left\{\E\left[\sum_{v\in N(\mathcal{C})}\tilde{\alpha}_{v,x} \mid \theta, B_j=x\right]\right\}
\end{align*}
Let the random variable $Z$ be the size of the dominating set after the second phase. We can upper bound the expected influence of nodes in $N(\mathcal{C})$ on the size of the dominating set after fixing the $j^{th}$ bit:
\begin{align*}
  \mathbb{E}[Z \mid \theta \land B_j = b_j] - \mathbb{E}[Z \mid \theta]
  &= \sum_{v\in V} \left(\mathbb{E}[Z_v \mid \theta \land B_j = b_j] - \mathbb{E}[Z_v \mid \theta]\right)\\
  &\stackrel{(*)}{=} \sum_{v\in N(\mathcal{C})} \left(\mathbb{E}[Z_v \mid \theta \land B_j = b_j] - \mathbb{E}[Z_v \mid \theta]\right)\\
  &= \sum_{v\in N(\mathcal{C})} \alpha_{v,b_j} - \mathbb{E}\left[\sum_{v\in N(\mathcal{C})} Z_v \mid \theta \right]\\
  &\leq \sum_{v\in N(\mathcal{C})} \left(\tilde{\alpha}_{v,b_j} + \frac{1}{n^{10}}\right) - \mathbb{E}\left[\sum_{v\in N(\mathcal{C})}Z_v \mid \theta\right]\\
  &\leq  \frac{1}{n^9} + \sum_{v\in N(\mathcal{C})} \tilde{\alpha}_{v,b_j} - \mathbb{E}\left[\sum_{v\in N(\mathcal{C})}Z_v \mid \theta\right]\\
&\stackrel{(**)}{=}  \frac{1}{n^9} +\min\left\{ \sum_{v\in N(\mathcal{C})} \tilde{\alpha}_{v,0},\sum_{v\in N(\mathcal{C})} \tilde{\alpha}_{v,1}\right\} - \mathbb{E}\left[\sum_{v\in N(\mathcal{C})}Z_v \mid \theta\right]\\
&\leq  \frac{2}{n^9} +\min\left\{ \sum_{v\in N(\mathcal{C})} \alpha_{v,0},\sum_{v\in N(\mathcal{C})} \alpha_{v,1}\right\} - \mathbb{E}\left[\sum_{v\in N(\mathcal{C})}Z_v \mid \theta\right]\\
&\leq  \frac{2}{n^9} +\sum_{v\in N(\mathcal{C})} \left(\min\left\{ \alpha_{v,0}, \alpha_{v,1}\right\} - \mathbb{E}\left[Z_v \mid \theta\right]\right)\\
&\stackrel{(***)}{\leq}  \frac{2}{n^9} \leq \frac{1}{n^8}~,
\end{align*}
where $(*)$ follows because $B_j$ does not influence $Z_v$ for $v\notin N(\mathcal{C})$, 
 $(**)$ follows due to the choice of $b_j$ and $(***)$ follows as $\min\left\{ \alpha_{v,0}, \alpha_{v,1}\right\} - \mathbb{E}\left[Z_v \mid \theta\right]\leq 0$ holds due to the law of total expectation.
\end{proof}
\renewcommand\qedsymbol{$\square$}

There are at most $n$ clusters and $K$ bits per cluster, thus at the end the size of the dominating set is upper bounded by $A+\sum_{v\in V}\Pr(E_v=1)+\frac{n\cdot K\cdot 2}{n^8}\leq A+\sum_{v\in V}\Pr(E_v=1)+n^{-6}$ for sufficiently large $n$.

\noindent\textbf{Claim:} Bits of distinct clusters with the same color can be fixed at the same time.
\renewcommand\qedsymbol{$\blacksquare$}
\begin{proof}
Let $\mathcal{C}_1$ and $\mathcal{C}_2$ be two distinct clusters of the same color. The computed network decomposition is a $2$-hop decomposition. Hence the distance between the two clusters in $G$ is at least three and $N(\mathcal{C}_1)$ and $N(\mathcal{C}_2)$ are disjoint. Thus the set of random variables $\{Z_v \mid v\in N(\mathcal{C}_1)\}$ and $\{Z_v \mid v\in N(\mathcal{C}_2)\}$, respectively, that are influenced by the bits of the random seed in cluster $\mathcal{C}_1$ and $\mathcal{C}_2$, respectively, are disjoint. 
\end{proof}
\renewcommand\qedsymbol{$\square$}
The fractionality of the resulting dominating set follows from the fractionality of the abstract randomized rounding algorithm.
\end{proof}

We define two algorithms that use the above rounding process (with different choices for $p(v)$) as their main subroutine. Both begin with an initial fractional dominating set with values $x'(v), v\in V$.
\paragraph{One shot rounding}
Increase the values of the input fractional dominating set by a factor $\ln\Deltab$ and obtain a fractional dominating set where node $v$ has value $x(v)=\min\{1,x'(v)\cdot \ln \Deltab\}$. Then execute the randomized rounding process  with value $x(v)$ and $p(v)=x(v)$ for all $v\in V$. 

Note that the process one shot rounding transforms a fractional dominating set into an integral dominating set. 
%+++Text that explains the process.
\paragraph{Factor two rounding}
Let $\eps>0$ and $r=\polylog n$. 
Increase the values of the input fractional dominating set by a factor $(1+\eps)$ and obtain a fractional dominating set where node $v$ has value $x(v)=\min\{1,x'(v)\cdot (1+\eps)\}$.
Then execute the randomized rounding process with values $x(v)$ for $v\in V$ and 
\begin{align*}
p(v) = \left\{\begin{array}{lr}
        \frac{1}{2}, & \text{if $x(v)<2/r$}\\
        1, & \text{if $x(v)\geq 2/r$}~. 
        \end{array}\right. 
\end{align*}
Thus in the factor two rounding nodes with value $x(v)<2/r$  double their value with probability $1/2$ and otherwise set their value to $0$. Nodes with value $x(v)\geq 2/r$ simply keep their value.

Recall that $E_v$ equals $1$ if node $v$ is uncovered after the first phase and $0$ otherwise. The proofs of the following  two crucial lemmas upper bound the probability of $\Pr(E_v=1)$ and use the following Chernoff type bound for $k$-wise independent random varibales.
\begin{theorem}[Theorem 5 in \cite{chernoffLimited95}] \label{thm:kindependent}
Let $X$ be the sum of $k$-wise independent $[0,\lambda]$-valued random variables with expectation $\mu=E(X)$ and let $\delta\leq 1$. Then we have
\[Pr(|X-\mu|\geq \delta\mu)\leq e^{-\lfloor \min\{k/2,\delta^2\mu/(3\lambda)\}\rfloor}~.\] 
\end{theorem}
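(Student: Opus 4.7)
The plan is to adapt the standard Chernoff argument to the $k$-wise independent setting by replacing the moment generating function (which in general depends on moments of every order and hence cannot be factored under only $k$-wise independence) with the $k$-th centered moment. First I would normalize: set $Y_i = X_i - \E[X_i]$ so that $\E[Y_i] = 0$ and $|Y_i| \leq \lambda$, and write $Y = \sum_i Y_i = X - \mu$. I would also assume without loss of generality that $k$ is even, replacing $k$ by $k-1$ otherwise; this costs only the floor in the exponent.

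Second, I would apply Markov's inequality to $Y^k$: since $k$ is even, $\Pr(|X-\mu| \geq \delta\mu) = \Pr(Y^k \geq (\delta\mu)^k) \leq \E[Y^k]/(\delta\mu)^k$. The heart of the proof is bounding $\E[Y^k]$. Expanding gives $\E[Y^k] = \sum_{(i_1,\dots,i_k)} \E[Y_{i_1}\cdots Y_{i_k}]$, and by $k$-wise independence the expectation of any product of at most $k$ of the $Y_j$'s factors over the distinct indices that appear. Since $\E[Y_i] = 0$, every tuple in which some index occurs with multiplicity exactly one contributes zero. The surviving tuples group by partitions of $\{1,\dots,k\}$ in which every block has size at least $2$, and I would bound each factor via $\E[Y_i^m] \leq \lambda^{m-2}\E[Y_i^2] \leq \lambda^{m-1}\E[X_i]$ (using $Y_i^2 \leq \lambda X_i$ since $X_i \in [0,\lambda]$).

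Third, I would control the combinatorial sum. Grouping by partition shape, pairings (the $k/2$ blocks of size exactly $2$) dominate the count, and a straightforward estimate gives $\E[Y^k] \leq (c\,\lambda\mu\,k)^{k/2}$ for some absolute constant $c$. Substituting into Markov yields $\Pr(|X-\mu|\geq \delta\mu) \leq \bigl(c\lambda k/(\delta^2\mu)\bigr)^{k/2}$. To derive the final form I observe that the hypothesis only requires $k$-wise independence for the given $k$, which implies $k'$-wise independence for every $k' \leq k$; hence I may pick $k^\star = 2\lfloor \min\{k/2,\ \delta^2\mu/(3\lambda)\}\rfloor$ (tuning the constant $c$ against the $3$) so that $c\lambda k^\star/(\delta^2\mu) \leq 1/e$, which gives exactly $e^{-\lfloor \min\{k/2,\ \delta^2\mu/(3\lambda)\}\rfloor}$.

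The main obstacle will be the bookkeeping in the third step: extracting the specific constant $3$ in $\delta^2\mu/(3\lambda)$ requires tracking the multinomial coefficients weighting each partition shape carefully and dominating the contribution of partitions with blocks of size $>2$ by the pure-pairing contribution, in a way that survives the optimization over $k^\star$. The rest is routine once the $k$-th moment bound is in place.
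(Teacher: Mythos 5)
The paper itself does not prove this statement: it is quoted verbatim (with the $[0,\lambda]$ scaling) as Theorem~5 of the cited work \cite{chernoffLimited95}, so your attempt has to be measured against that source's argument rather than anything in this paper. Your skeleton is the right family of proof: taking $k$ even (absorbing the parity loss into the floor), applying Markov to $(X-\mu)^k$, noting that under $k$-wise independence every monomial of degree $\le k$ factors and every tuple containing a singleton index vanishes, and exploiting that $k$-wise independence implies $k'$-wise independence for all $k'\le k$ so that the moment order can be tuned to the minimum in the exponent---all of this is sound.

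The genuine gap is your third step, and it is not just bookkeeping. For your final optimization to produce the stated bound you need $\E[(X-\mu)^{k^\star}]\le (c\lambda\mu k^\star)^{k^\star/2}$ with $c\le 3/(2e)\approx 0.55$, since $k^\star\approx 2\delta^2\mu/(3\lambda)$ forces $c\lambda k^\star/(\delta^2\mu)\approx 2c/3\le 1/e$. The pure-pairing contribution alone is about $(k-1)!!\,(\lambda\mu)^{k/2}\approx (k/e)^{k/2}(\lambda\mu)^{k/2}$, i.e.\ already $c\approx 1/e$, so the partitions with blocks of size greater than $2$ must be shown to cost only a very small multiplicative correction; the generic partition-counting estimates of this type (Bellare--Rompel style, with a universal constant like $8^k$) are far too lossy to recover the specific constant $3$, and your proposal leaves exactly this point as a hope (``tuning the constant $c$ against the $3$''). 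The cited proof sidesteps the combinatorics: since $(X-\mu)^{k'}$ expands into joint moments of order at most $k'$, its expectation under $k'$-wise independence \emph{equals} its expectation under full independence, so one may bound the central moment of the fully independent sum via the exponential moment (using $y^{k'}\le (k'/(te))^{k'}e^{ty}$ and the standard MGF estimate), which is what delivers the clean threshold $\delta^2\mu/(3\lambda)$. As written, your argument establishes a bound of the form $e^{-c'\min\{k/2,\,\delta^2\mu/\lambda\}}$ for some unspecified constant $c'$, but not the precise statement; to close it you should either prove and use the equivalence-with-full-independence observation, or carry out the partition sum with explicit constants in the regime $\mu\ge \tfrac{3}{2}\lambda k^\star$ (which your choice of $k^\star$ guarantees) and verify that the resulting $c$ is small enough.
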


\begin{restatable}{lemma}{lemOneShotUncover}
  \label{lem:oneshotUncover}
  Assume a given $1/F$-fractional dominating set for some integer $F>0$. Then in the one shot rounding process we have $Pr(E_v=1)\leq \Deltab^{-1}$. The result holds even if the biased coins of the nodes are only $k$-wise independent for any $k\geq F$.
\end{restatable}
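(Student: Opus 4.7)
The plan is to reduce the event $\{E_v=1\}$ to a lower-tail concentration event for a sum of $k$-wise independent Bernoullis, and then invoke Theorem~\ref{thm:kindependent}. Unpacking the definitions: in one-shot rounding $p(u)=x(u)$, so $X_u\in\{0,1\}$ with $\Pr(X_u=1)=x(u)$. Since all constraints equal $1$ and the $X_u$'s are $\{0,1\}$-valued, the event $\{E_v=1\}$ coincides with $\{Y_v=0\}$ where $Y_v:=\sum_{u\in N(v)} X_u$.

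I first dispose of the trivial case: if some $w\in N(v)$ has $x'(w)\ln\Deltab\ge 1$, then the clipping in the definition of $x$ gives $x(w)=1$, so $X_w=1$ deterministically and $\Pr(E_v=1)=0$. Otherwise $x(u)=x'(u)\ln\Deltab$ for every $u\in N(v)$, and the fractional-dominating-set constraint $\sum_{u\in N(v)} x'(u)\ge 1$ yields the crucial mean estimate
\[
\mu_v\;:=\;\E[Y_v]\;=\;\ln\Deltab\cdot\sum_{u\in N(v)} x'(u)\;\ge\;\ln\Deltab.
\]
Applying Theorem~\ref{thm:kindependent} to the $k$-wise independent sum $Y_v$ with $\lambda=1$ and $\delta=1$, together with the inclusion $\{Y_v=0\}\subseteq\{|Y_v-\mu_v|\ge\mu_v\}$, gives
\[
\Pr(E_v=1)\;\le\;\exp\!\bigl(-\lfloor\min\{k/2,\,\mu_v/3\}\rfloor\bigr)\;\le\;\Deltab^{-1},
\]
whenever both $k/2$ and $\mu_v/3$ are at least $\ln\Deltab$.

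The only delicate step is choosing a concentration bound that remains valid under bounded independence, and verifying that the bounded-independence slack term $k/2$ in Theorem~\ref{thm:kindependent} is not the binding constraint. This is exactly what the hypothesis $k\ge F$ delivers: one-shot rounding is only invoked once the fractionality has been amplified to $F=\polylog\Delta$, which comfortably dominates $\ln\Deltab$. The factor $3$ in the $\mu_v/3$ term is absorbed either into a slightly larger scaling constant than $\ln\Deltab$, or into the $(1+\eps)$-slack of the main theorem; all remaining combinatorics boils down to the single mean estimate $\mu_v\ge\ln\Deltab$, which follows from linearity of expectation and the defining property of the fractional dominating set $x'$.
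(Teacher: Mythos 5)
There is a genuine gap, and it is quantitative: the conclusion you need is $\Pr(E_v=1)\le \Deltab^{-1}$, but the bound you get from \Cref{thm:kindependent} is $\exp\bigl(-\lfloor\min\{k/2,\mu_v/3\}\rfloor\bigr)$, and a fractional dominating set only guarantees $\sum_{u\in N(v)}x'(u)\ge 1$, hence only $\mu_v\ge\ln\Deltab$, not $\mu_v\ge 3\ln\Deltab$. So your exponent is at best about $\tfrac{1}{3}\ln\Deltab$ and you prove roughly $\Pr(E_v=1)\le\Deltab^{-1/3}$, not $\Deltab^{-1}$. This factor of $3$ cannot simply be ``absorbed'': rescaling the one-shot process by $3\ln\Deltab$ would push the final approximation factor to about $3\ln\Deltab$, which cannot hide in the $(1+\eps)$ slack since $\eps$ may be as small as $1/\polylog\Delta$; and keeping the scaling but settling for $\Deltab^{-1/3}$ breaks the downstream accounting, because the derandomized rounding adds $\sum_{v}\Pr(E_v=1)$ to the solution size, and while $n/\Deltab$ is at most $\mathrm{OPT}$ (as $\mathrm{OPT}\ge n/\Deltab$), the term $n/\Deltab^{1/3}$ can exceed $\mathrm{OPT}$ by a $\Deltab^{2/3}$ factor. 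A secondary problem is that your argument additionally needs $k/2\ge\ln\Deltab$, which is not implied by the lemma's actual hypothesis $k\ge F$ for an arbitrary integer $F>0$ (the paper explicitly remarks that the lemma holds for all $F$); for small $F$, say constant, your exponent is capped at $k/2=O(1)$ and the bound is vacuous.

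The paper's proof avoids concentration entirely, and this is exactly where $k\ge F$ is used. Since the input is $1/F$-fractional, one can pick a set $S\subseteq N(v)$ of at most $F$ nodes with $\sum_{u\in S}x'(u)\ge 1$; because $|S|\le F\le k$, the coins of the nodes in $S$ are \emph{mutually} independent under $k$-wise independence, so the exact product bound applies: $\Pr(E_v=1)\le\prod_{u\in S}\bigl(1-x(u)\bigr)\le e^{-\ln\Deltab\cdot\sum_{u\in S}x'(u)}\le\Deltab^{-1}$, with no constant loss and for every $F>0$ and $k\ge F$. Your clipping case ($x(w)=1$ for some neighbor) and the mean estimate are fine; the missing idea is to restrict to a covering set of size at most $F$ and exploit full independence there, rather than running a bounded-independence Chernoff bound over all of $N(v)$.
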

\begin{proof}
  Let $v\in V$ be a node. As the initial dominating set is $1/F$-fractional $v$ is already covered by $F$ nodes of $N(v)$. Let $S\subseteq N(v)$ denote a set of $F$ nodes that cover $v$. If any $u\in S$ has $x(u)=1$ node $v$ is covered. Assume that $x(u)<1$ for all $u\in S$. Then we have  $x(u)= \ln \Delta\cdot x'(u)$ for all $u\in S$. Let $Y=\sum_{u\in S}X_u$ and let 
\begin{align*}
  \mu:=\E[Y]=\sum_{u\in S}p(u)\cdot \frac{x(u)}{p(u)}+(1-p(u))\cdot 0 \geq \ln\Deltab \sum_{u\in S}x'(u)\geq \ln \Deltab~.
\end{align*}
$Y$ is the sum of $[0,1]$-valued random variables that are independent as $k\geq F\geq |S|$. We obtain
\begin{align*}
\Pr(v \text{ is uncovered}) & \leq \prod_{u\in S}{(1-\Pr(X_u=1))}  
  \leq  e^{-\ln\Deltab \cdot \sum_{u\in S}x'(v)} 
 \leq e^{-\ln \Deltab} = \Deltab^{-1} ~. & & \qedhere
\end{align*}
\end{proof}

\begin{restatable}{lemma}{lemFactorTwoUncover}
\label{lem:factorTwoUncover}
Let $\eps>0$, $r\geq 256\eps^{-3}\ln \Deltab$ and assume a given fractional dominating set.
In the factor two rounding process we have $Pr(E_v=1)\leq \Deltab^{-4}$.
The result holds even if the biased coins of the nodes are only $k$-wise independent for some $k\geq 8\ln \Deltab$.
\end{restatable}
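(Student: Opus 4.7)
The plan is to fix a vertex $v$ and bound the probability that $v$ is uncovered after phase~1 via a Chernoff-type bound for $k$-wise independent variables (\Cref{thm:kindependent}).

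First I would reduce to the interesting case. Let $S = N(v)$ and split $S = S_1 \cup S_2$ where $S_1 = \{u \in S : x(u) \geq 2/r\}$ (so $p(u)=1$ and $X_u = x(u)$ is deterministic) and $S_2 = \{u \in S : x(u) < 2/r\}$ (so $p(u)=1/2$ and $X_u \in \{0, 2x(u)\} \subseteq [0, 4/r]$). Set $A = \sum_{u \in S_1} x(u)$ and $B = \sum_{u \in S_2} x(u)$. Note that if any $u \in S$ has $x(u)=1$, then $u \in S_1$ and $v$ is covered deterministically; otherwise $x(u) = (1+\eps) x'(u)$ for every $u \in S$ and the fractional dominating set property gives $A + B = (1+\eps)\sum_{u \in S} x'(u) \geq 1+\eps$. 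Similarly if $A \geq 1$ then $v$ is covered deterministically, so we may assume $A < 1$, which forces $B \geq 1+\eps - A \geq \eps$.

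Next I would apply the Chernoff bound to $W = \sum_{u \in S_2} X_u$. Here $\mu = \E[W] = B$ and $W$ is a sum of $k$-wise independent $[0,\lambda]$-valued variables with $\lambda = 4/r$. The event $E_v = 1$ implies $A + W < 1$, i.e.\ $W < \mu - \tau$ with deviation $\tau := B - (1-A) \geq \eps$. Setting $\delta = \tau/\mu \leq 1$, I would verify that
\begin{align*}
\frac{\delta^2 \mu}{3\lambda} \;=\; \frac{\tau^2}{3B\cdot (4/r)} \;=\; \frac{r\tau^2}{12\,B}~.
\end{align*}
A short monotonicity argument on $f(B) = (B-(1-A))^2/B$ (which is increasing in $B$ for $B \geq 1-A$) shows $f(B) \geq f(1-A+\eps) = \eps^2/(1-A+\eps) \geq \eps^2/2$, so $\delta^2\mu/(3\lambda) \geq r\eps^2/24$. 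With the assumption $r \geq 256\eps^{-3}\ln\Deltab$ and $\eps \leq 1$ this is at least $4\ln\Deltab$. Combined with the hypothesis $k \geq 8\ln\Deltab$, the Chernoff bound of \Cref{thm:kindependent} yields
\begin{align*}
\Pr(E_v=1) \;\leq\; \Pr\bigl(|W - \mu| \geq \delta\mu\bigr) \;\leq\; e^{-\lfloor\min\{k/2,\,\delta^2\mu/(3\lambda)\}\rfloor} \;\leq\; e^{-4\ln\Deltab} \;=\; \Deltab^{-4}~.
\end{align*}

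The only mildly delicate step is the lower bound on $\delta^2\mu/\lambda$: since $B$ can range anywhere from $\eps$ up to $\Theta(\Delta)$, it is not immediately obvious that the ratio $\tau^2/B$ stays bounded away from zero. The monotonicity observation above (worst case at $B = 1-A+\eps$) handles this cleanly and is really the only non-routine calculation in the proof.
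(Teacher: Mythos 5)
Your proof is correct and takes essentially the same route as the paper's: split $N(v)$ into the neighbors that participate in the rounding and those that do not, handle the non-participating part deterministically, and apply the $k$-wise-independent Chernoff bound (\Cref{thm:kindependent}) to the participating sum, using $r\geq 256\eps^{-3}\ln\Deltab$ and $k\geq 8\ln\Deltab$ exactly as the paper does. The only difference is bookkeeping: the paper cases on the participating mass and uses the relative deviation $\eps/(1+\eps)$ together with $\mu\geq\eps$, whereas you fold the slack into the absolute deviation $\tau\geq\eps$ and control $\tau^2/B$ via your monotonicity observation---both yield the same exponent up to constants.
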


\begin{proof}
Recall that $x'(v)$ is the value of node $v$ before the process is executed, $x(v)$ is the value before the randomized rounding is executed and $X_v$ is its value after the first phase. $E_v$ equals $1$, if node $v$ is uncovered after the first phase, that is, if $\sum_{u\in N(v)}X_v<1$.

We say, that a node with $p(v)<1$ \emph{takes part} in the rounding process and a node with $p(v)=1$ \emph{does not take part} in the rounding process. Let $F=r$. 
Fix a node $v\in V$ and let $\NF=\{u\in N^+(v) \mid x(u)\leq 1/F\}$ be the set of its neighbors that take part in the rounding process. Further, let $\NNF=N(v)\setminus \NF$ be the nodes that do not take part in the rounding process. Note that all $u\in \NNF$ have $x(u)\geq 1/F$ and thus either $X_u$ equals $1$ or $X_u\geq (1+\eps)x'(u)$. If $X_u=1$ for some $u\in \NNF$ node $v$ is covered, so without loss of generality we may assume that $X_u\geq (1+\eps)x'(u)$ for all $u\in \NNF$.

Let $S_v= \sum_{u\in \NF} x'(u)$ be the total initial value of the neighbors that take part in the rounding and  let $T_v=\sum_{u\in \NNF} x'(u)$ be the total initial value of the neighbors that do not take part in the rounding. As the initial values form a dominating set we have $S_v+T_v\geq 1$. We perform a case distinction depending on the value of $S_v$. 

\noindent\textit{Case $S_v\leq \frac{\eps}{1+\eps}$:}
If there is a node $u\in \NNF$ with $X_u=1$ node $v$ is covered. Otherwise all nodes nodes $u\in \NNF$ have $X_u\geq (1+\eps)x'(u)$ and we have 
\[\sum_{u\in  \NNF} X_u\geq\sum_{u\in  \NNF} (1+\eps)x'(u)\geq (1+\eps)T_v\geq (1+\eps)\cdot (1-S_v)\geq 1~.\] 
Thus in this case $v$ is covered by the nodes in $\NNF$. 

\medskip 

\noindent\textit{Case $S_v>\frac{\eps}{1+\eps}$:}
 If $v$ is uncovered we have
\begin{align*}
1 & >\sum_{u\in N(v)}X_u=\sum_{u\in  \NNF} X_u+\sum_{u\in  \NF} X_u\\
& \stackrel{(*)}{\geq} (1+\eps)\cdot (1-S_v)+ \sum_{u\in  \NF} X_u\geq  (1-S_v)+\sum_{u\in  \NF} X_u
\end{align*}
where $(*)$ follows with the same argumentation as in the first case if $v$ is not covered by a single node in $\NNF$.
Thus if $v$ is uncovered we have  $\sum_{u\in  \NF} X_u<S_v$. We next bound the probability of this to happen which yields the claim. Denote $Y=\sum_{u\in  \NF} X_u$ and node that $Y$ is the sum of $k$-wise independent $[0,2(1+\eps)/F]$ valued random variables as $X_u\leq x(u)/p_u=2x(u)\leq 2(1+\eps) x'(u)\leq 2(1+\eps)1/F$ for $u\in \NF$. We compute $Y$'s expectation as
\begin{align*}
\mu:=\E\left[Y\right]=\sum_{u\in  \NF}p_u\cdot \frac{x(u)}{p_u}+0\cdot (1-p(v))=\sum_{u\in  \NF}(1+\eps)x'(u)=(1+\eps)S_v~.
\end{align*}
By \Cref{thm:kindependent} we have
\begin{align*}
\Pr\left(Y<S_v\right) & =\Pr(Y-\mu<S_v-\mu) 
 \leq  \Pr(|Y-\mu|\geq \eps S_v)=\Pr\left(|Y-\mu|\geq \frac{\eps}{1+\eps} \mu\right) \\ 
&  \leq e^{-\min\{k/2,\lceil\left(\frac{\eps}{1+\eps}\right)^2\mu/(3\cdot 2(1+\eps))\cdot F\rceil\}} 
 \stackrel{(*)}{\leq} e^{-\min\{k/2,\eps^3/48\cdot F\}}\stackrel{(**)}{\leq}  \frac{1}{\Deltab^4}
\end{align*}
where $(*)$ holds as $\mu\geq \eps$ and $\eps\leq 1$ and $(**)$ holds due to $k\geq 8 \log \Deltab $ and  $F\geq 256\eps^{-3} \log \Deltab $.\qedhere
\end{proof}

Note that \Cref{lem:oneshotUncover} works for all $F$, but if the fractionality is small we need larger independence. The derandomization lemma (\Cref{lem:derandomization}) is only efficient if we only need $k$-wise independence for some $k=\polylog n$. \Cref{lem:factorTwoUncover} crucially depends on a small fractionality as, otherwise if only few nodes double their value, there is no concentration around the expectation.

Applying \Cref{lem:derandomization} to both processes implies deterministic counterparts whose properties we summarize in the following two lemmas. 
\begin{restatable}[Deterministic One Shot Rounding]{lemma}{lemOneShotRounding}
\label{lem:oneShotRounding}
Let $F=\polylog n$. There is a deterministic \CONGEST algorithm that, given a $1/F$-fractional dominating set of size $A$, computes an integral dominating set with size at most $(\ln\Deltab)\cdot A+\frac{n}{\Deltab}+\frac{1}{n^5}$ and has round complexity $2^{O(\sqrt{\log n \log \log n})}$.
\end{restatable}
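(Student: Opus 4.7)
The plan is to instantiate the abstract randomized rounding process as the \emph{one shot rounding} defined just before \Cref{lem:oneshotUncover}, apply the derandomization machinery of \Cref{lem:derandomization}, and then bound the expected number of uncovered nodes via \Cref{lem:oneshotUncover}. All three ingredients are already in place, so the work is mostly to assemble them and track the error terms.

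First, given the input $1/F$-fractional dominating set with values $x'(v)$, I define $x(v) := \min\{1, x'(v) \cdot \ln\Deltab\}$ and $p(v) := x(v)$. Using a transmittable approximation of $\ln\Deltab$ (rounded up to the nearest multiple of $2^{-\iota}$) only inflates constants and is well within the $1/n^6$ slack introduced by derandomization, so I will treat $x(v)$ and $p(v)$ as transmittable. Viewed as a CFDS with constraints $c(v) \equiv \min\{1, c'(v) \ln\Deltab\}$ (which are all $\leq 1$), the scaled solution $(x,c)$ has size at most $A \cdot \ln\Deltab$, and $p(v) \geq x(v)$ trivially.

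Next, I invoke \Cref{lem:derandomization} with these $x(v), p(v)$ and independence parameter $k := F = \polylog n$ (which is $\polylog n$ as required). This produces, deterministically in $2^{O(\sqrt{\log n \log\log n})}$ rounds, a CFDS with fractionality $\min_v \{x(v)/p(v)\} = 1$, i.e.\ an \emph{integral} dominating set, of size at most
\begin{equation*}
A \cdot \ln\Deltab \;+\; \sum_{v \in V} \Pr(E_v = 1) \;+\; \frac{1}{n^6},
\end{equation*}
where $\Pr(E_v=1)$ is computed under the $k$-wise independent rounding primitive. Since $k = F$ satisfies the hypothesis of \Cref{lem:oneshotUncover} (which only needs $k \geq F$), each such probability is at most $\Deltab^{-1}$, so the middle term is at most $n / \Deltab$. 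Combining and using $1/n^6 \leq 1/n^5$ yields the claimed bound $(\ln\Deltab) \cdot A + n/\Deltab + 1/n^5$.

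The only subtlety I anticipate is a bookkeeping one: the input to \Cref{lem:derandomization} is framed as a CFDS, while \Cref{lem:oneshotUncover} is stated for a fractional dominating set (all constraints equal to $1$). This is harmless — the scaled instance is still a legitimate CFDS, and \Cref{lem:oneshotUncover}'s argument (a product over a size-$F$ covering subset $S \subseteq N(v)$ with total initial weight $\geq 1$) applies verbatim to the original constraint $c'(v) = 1$, which is what matters for the output being a genuine dominating set. The runtime of $2^{O(\sqrt{\log n \log\log n})}$ is inherited directly from \Cref{lem:derandomization}, so no additional analysis is required.
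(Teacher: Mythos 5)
Your proposal is correct and follows essentially the same route as the paper: instantiate the one shot rounding with $p(v)=x(v)=\min\{1,x'(v)\ln\Deltab\}$, derandomize via \Cref{lem:derandomization} with $k=F=\polylog n$, bound $\sum_v \Pr(E_v=1)\leq n/\Deltab$ by \Cref{lem:oneshotUncover}, and absorb the transmittability and derandomization error terms into the $1/n^5$ slack. The paper handles the transmittability loss as an explicit extra $n^{-9}$ term rather than folding it into the $1/n^6$, but this is only a bookkeeping difference.
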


\begin{proof}
Apply \Cref{lem:derandomization} to the one shot rounding process with $k=F=\polylog n$ yielding an integral dominating set of size $(\ln\Deltab)\cdot A+\sum_{v\in V}\Pr(E_v=1)+n^{-6}+n^{-9}\leq (\ln\Deltab)\cdot A+\sum_{v\in V}\Pr(E_v=1)+n^{-5}$ where the $n^{-9}$ occurs as we have to increase each value such that it is transmittable before applying \Cref{lem:derandomization}.
\Cref{lem:oneshotUncover} bounds $\Pr(E_v=1)\leq \Deltab^{-1}$ for all $v\in V$ which proves the claim.
\end{proof}

\begin{restatable}[Deterministic Factor Two Rounding]{lemma}{lemFactorTwoRounding}
\label{lem:factorTwoRounding}
Let $\eps>1/\polylog n$, $r\geq F= 256\eps^{-3}\ln \Deltab$. There is a deterministic \CONGEST algorithm that, given a $1/r$-fractional dominating set with size $A$, computes a $2/r$-fractional dominating set with size $(1+\eps)A+\frac{n}{\Deltab^4}+\frac{1}{n^5}$ and has round complexity $2^{O(\sqrt{\log n \log \log n})}$.
\end{restatable}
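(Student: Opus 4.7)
The plan is to directly instantiate the derandomization machinery of \Cref{lem:derandomization} on the factor-two rounding process, and then combine it with the concentration bound from \Cref{lem:factorTwoUncover}.

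First I would preprocess the input $1/r$-fractional dominating set with values $x'(v)$: each node multiplies its value by $(1+\eps)$, capping at $1$, and rounds the result upward to the nearest multiple of $2^{-\iota}$ so that the resulting $x(v)$ is \CONGEST transmittable. This rounding can introduce an additive error of at most $n \cdot 2^{-\iota} \leq 1/n^9$ to the total size, so the scaled CFDS has size at most $(1+\eps)A + 1/n^9$. I also define $p(v) = 1/2$ when $x(v) < 2/r$ and $p(v) = 1$ otherwise (both trivially transmittable), which completes the input to the abstract rounding process.

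Next, I invoke \Cref{lem:derandomization} with parameter $k$ taken to be, say, $k = \max\{8 \ln \Deltab, F\}$, which by the assumption $\eps > 1/\polylog n$ is still $\polylog n$. This yields a deterministic \CONGEST algorithm running in $2^{O(\sqrt{\log n \log\log n})}$ rounds that produces a CFDS of size at most $(1+\eps)A + 1/n^9 + \sum_{v\in V}\Pr(E_v=1) + 1/n^6$, where $E_v$ refers to the first-phase uncoverage event under $k$-wise independent biased coins. Since $k \geq 8\ln\Deltab$ and $r \geq F = 256\eps^{-3}\ln\Deltab$, \Cref{lem:factorTwoUncover} applies and gives $\Pr(E_v=1) \leq \Deltab^{-4}$ for every $v$. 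Summing over nodes bounds the error by $n/\Deltab^4$, so the total size is at most $(1+\eps)A + n/\Deltab^4 + 1/n^5$ as required.

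Finally I would verify the fractionality. By the first part of \Cref{lem:abstractRoundingWithConstraints}, the output fractionality is $\min_v x(v)/p(v)$. For nodes with $p(v) = 1$, the contribution is $x(v) \geq 2/r$ by definition of the case split; for nodes with $p(v) = 1/2$ and nonzero $x(v)$, we have $x(v)/p(v) = 2x(v) \geq 2(1+\eps)x'(v) \geq 2/r$, since the input was $1/r$-fractional. Hence the output is $2/r$-fractional. The overall runtime is dominated by the $2^{O(\sqrt{\log n \log\log n})}$ cost of \Cref{lem:derandomization}, since the preprocessing is local.

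The only subtle point, and the main thing to double check, is that the dependence parameter $k$ and fractionality parameter $F$ demanded by \Cref{lem:factorTwoUncover} remain $\polylog n$ under the hypothesis $\eps > 1/\polylog n$, so that \Cref{lem:derandomization} is applicable with its stated runtime; this is where the lower bound on $\eps$ is really used. Everything else is bookkeeping of the various additive $1/\poly(n)$ error terms.
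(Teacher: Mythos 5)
Your proposal is correct and follows exactly the intended route: apply \Cref{lem:derandomization} to the factor-two rounding instance with $k=\polylog n$ (in fact $k\geq 8\ln\Deltab$ suffices), bound $\Pr(E_v=1)\leq \Deltab^{-4}$ via \Cref{lem:factorTwoUncover}, and account for the transmittability-rounding and derandomization error terms, with the fractionality argument matching \Cref{lem:abstractRoundingWithConstraints}. Note that the proof printed in the paper for this lemma is evidently a copy-paste of the one-shot rounding proof (it cites \Cref{lem:oneshotUncover} and the $\ln\Deltab$ factor), so your write-up is actually the argument the authors intended.
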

\begin{proof}
Apply \Cref{lem:derandomization} to the one shot rounding process with $k=F=\polylog n$ yielding an integral dominating set of size $(\ln\Deltab)\cdot A+\sum_{v\in V}\Pr(E_v=1)+n^{-6}+n^{-9}\leq (\ln\Deltab)\cdot A+\sum_{v\in V}\Pr(E_v=1)+n^{-5}$ where the $n^{-9}$ occurs as we have to increase each value such that it is transmittable before applying \Cref{lem:derandomization}.
\Cref{lem:oneshotUncover} bounds $\Pr(E_v=1)\leq \Deltab^{-1}$ for all $v\in V$ which proves the claim.
\end{proof}

%\jde{@yannic: maybe more text to explain the context of all the upper lemmas}

%%% Local Variables:
%%% mode: latex
%%% TeX-master: "main"
%%% End:

%\clearpage
\subsection{Derandomization via Colorings}
\label{sec:derandomizationD}
In this subsection we present our second derandomization lemma which mainly expressed the runtime as a function of the maximum degree $\Delta$. In its proof we iterate through the color classes of a distance two coloring of a suitable set $S\subseteq V$ of the nodes of a given graph $G=(V,E)$ and at each node we use the method of conditional expectation to locally determine random bits that are sufficient to execute the abstract randomized rounding algorithm such that the computed dominating set is small enough. 
Here, a coloring of the nodes in $S$ is a \emph{distance two coloring} if $d_G(u,v)>2$ for any two nodes $u,v\in S$ that have the same color.

\begin{restatable}[Derandomization Lemma II]{lemma}{lemDerandomizationDelta}
\label{lem:derandomizationDelta}
There is a determ. \CONGEST algorithm that, given 
\begin{itemize}
  \item a constraint fractional dominating set $(x,c)$ of size $A$ with transmittable values $x(v), v\in V$,
\item a distance two $C$-coloring $\phi$ of $S=\{v\in V \mid p(v)\notin \{0,1\}\}$ and  
\item transmittable probabilities  $p(v)\geq x(v), v\in V$,
\end{itemize}
computes a constraint fractional dominating set with fractionality $\min_{v\in V}\{ x(v)/p(v)\}$ and size at most $A+\sum_{v\in V}\Pr(E_v=1) +\frac{1}{n^7}$ where $E_v$ is the event that $v\in V$ is uncovered after the first phase if the abstract randomized rounding process was executed with $x(v)$ and $p(v)$ for $v\in V$. The round complexity of the algorithm is  $O(C)$.
\end{restatable}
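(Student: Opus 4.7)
The plan is to derandomize the abstract rounding process of \Cref{sec:abstractRounding} by fixing the random bits $B_w\in\{0,1\}$ of the nodes $w\in S$ (nodes with $p(v)\in\{0,1\}$ have a deterministic outcome and need no random bits) one color class at a time via the method of conditional expectations, in the spirit of the proof of \Cref{lem:derandomization}. First I would iterate through color classes $i=1,\ldots,C$ sequentially. In iteration $i$, every node $w\in S$ with $\phi(w)=i$ simultaneously commits to a value of $B_w$ that minimizes a locally computed conditional expectation. Because $\phi$ is a distance-$2$ coloring of $S$, any two such nodes satisfy $d_G(w_1,w_2)>2$, so the neighborhoods $N(w_1)$ and $N(w_2)$ are disjoint and the random variables $\{Z_v:v\in N(w_1)\}$ and $\{Z_v:v\in N(w_2)\}$ whose conditional expectations change are disjoint. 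Consequently $Z_v$ for $v\in N(w_1)$ does not depend on $B_{w_2}$, so the parallel assignments do not interfere with one another's pessimistic-estimator argument.

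For a single node $w$ fixing $B_w$, the quantity to compare is $\alpha_{w,b}:=\sum_{v\in N(w)}\E[Z_v\mid \mathcal{H},B_w=b]$ for $b\in\{0,1\}$, where $\mathcal{H}$ records all previously-fixed bits (from earlier color classes and from other nodes of the current class being resolved in parallel) and $Z_v$ denotes $v$'s value after phase two of the abstract rounding. Since $Z_v$ is determined by $v$'s own bit together with the covering condition $\sum_{u\in N(v)}X_u\geq c(v)$, the term $\E[Z_v\mid\mathcal{H},B_w=b]$ is a function of the $x,p,c$ values on $N(v)$ together with the already-fixed bits on $N(v)$, and all of $N(v)$ lies within $G$-distance $2$ of $w$. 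Crucially, in this lemma the bits of the abstract process are fully independent (not merely $k$-wise), so each such conditional expectation can be written as an explicit polynomial in the marginals $p(u)$ and the already-fixed bits. Thus $w$ computes $\alpha_{w,0}$ and $\alpha_{w,1}$ from data held within its $2$-hop neighborhood; gathering this data and broadcasting the choice $B_w:=\argmin_b\alpha_{w,b}$ back costs $O(1)$ \CONGEST rounds per color class, for total round complexity $O(C)$.

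As in the proof of \Cref{lem:derandomization}, to keep all messages transmittable in \CONGEST each summand is rounded to a multiple of $n^{-10}$ before aggregation. The law of total expectation guarantees $\min_b\alpha_{w,b}\leq \E[\sum_{v\in N(w)}Z_v\mid\mathcal{H}]$, and the rounding introduces additive slack at most $O(|N(w)|/n^{10})\leq O(1/n^9)$ per bit-fixing step; since at most $|S|\leq n$ bits are fixed overall, the total slack is bounded by $1/n^7$, which yields the advertised size bound $A+\sum_v\Pr(E_v=1)+1/n^7$. The fractionality $\min_v\{x(v)/p(v)\}$ is inherited directly from \Cref{lem:abstractRounding}. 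The main point I expect to have to verify carefully is the locality claim that $\E[Z_v\mid\mathcal{H},B_w=b]$ really is determined by data within distance $2$ of $w$ and can be exchanged in $O(1)$ \CONGEST rounds per color class, so that the per-color-class cost does not grow with $\Delta$; beyond this locality check, the argument is essentially a parallel, single-bit-per-node version of the pessimistic-estimator procedure already used for \Cref{lem:derandomization}.
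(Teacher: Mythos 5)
Your proposal is correct and follows essentially the same route as the paper: iterate over the color classes of the distance-two coloring, have each node of the current class fix its coin by minimizing the sum of its neighbors' conditional expectations (rounded to multiples of $n^{-10}$ and aggregated), use the distance-two property to argue that simultaneous decisions within a class do not interfere, and charge a per-step slack of $O(n^{-9})$ for a total of $1/n^7$. The locality point you flag is resolved exactly as you suggest: each neighbor $u\in N(w)$ computes $\E[Z_u\mid\cdot]$ from its own one-hop information (fixed coins, $x$- and $p$-values of $N(u)$), so only two transmittable scalars per neighbor are sent to $w$, giving $O(1)$ \CONGEST rounds per color class.
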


\begin{proof}
We derandomize the first phase of the abstract randomized rounding process with the method of conditional expectation where the objective function is the size of the dominating set after the second phase. For a node $v\in V$ let $X_v$ denote $v$'s value after the first phase of the process. Due to \Cref{lem:abstractRoundingWithConstraints}  the process computes a constraint fractional dominating set with the desired fractionality and its expected outcome is of size $A + \sum_{v \in V} \Pr(E_v = 1)$. Recall, that all nodes with $p(v)\notin \{0,1\}$ flip a single biased coin in the first phase. 
To derandomize these coin flips we iterate through the $C$ color classes. In iteration $i$ all nodes with color $i$ determine the outcome of their coin flip (or equivalently the value of $X_v$) such that the total expected size at the end of the second phase (over the randomness of coin flips of nodes with colors $i+1,\ldots,C$) does not increase by more than $1/n^{7}$. Fix some iteration $i$, assume that all nodes with  color $<i$ have already fixed their coin flips and let $\theta$ denote the fixed coin flips of all nodes with color $<i$.  We describe how a single node $v$ with color $i$ determines $X_v$.  Let each node $u\in N(v)$ compute its expected fractional value after the second phase, once conditioned on all fixed coin flips and $X_v=x(v)/p(v)$ and once conditioned on all fixed coin flips and $X_v=0$. 
	\begin{align}
          \alpha_{u,1}&=\mathbb{E}[Z_u  \mid \theta_u\wedge X_v=x(v)/p(v) ] & \alpha_{u,0} &=\mathbb{E}[Z_u \mid \theta_u\wedge X_v=0 ] 
	\end{align}
	        Let $\tilde{\alpha}_{u,1}$ and $\tilde{\alpha}_{u,0}$ denote $\alpha_{u,1}$ and $\alpha_{u,0}$ rounded up to the next multiple of $1/n^{10}$, respectively. Each node can compute both values in a single round by learning its neighbors already fixed coin flips (fixed coin flips of nodes that are further away do not influence the expectation of $Z_u$), $x$-values and probabilities; all these values can be learned in a single round as all values are transmittable. Then, each $u\in N(v)$ sends $\tilde{\alpha}_{u,j}$, $j=0,1$ to $v$ and $v$ computes the values $\widetilde{A}_{v,1}$ and $\widetilde{A}_{v,0}$:
        \begin{align}
          &\widetilde{A}_{v,1} =\sum_{u\in N(v)}\tilde{\alpha}_{u,1} \quad\quad &&  A_{v,1} =\sum_{u\in N(v)}\alpha_{u,1}\\
          &\widetilde{A}_{v,0} =\sum_{u\in N(v)}\tilde{\alpha}_{u,0} \quad\quad &&  A_{v,0} =\sum_{u\in N(v)}\alpha_{u,0}
	\end{align}
	 The values values $A_{v,1}$ and $A_{v,0}$ are only defined for analysis purposes and cannot be computed by node $v$.
Then node $v$ fixes its biased coin (or equivalently sets $X_v$) as follows: 
\begin{align}  \label{eqn:decision}
X_v=
\left\{\begin{array}{lr}
\frac{x(v)}{p(v)}, & A_{v,1}<A_{v,0} \\
0, & \text{otherwise}.
\end{array}\right.
\end{align} 
The decision of $v$ only depends on values and probabilities in $v$'s $2$-neighborhood and none of these values are changed in iteration $i$ as due to the distance two coloring property no other node in the $2$-neighborhood of $v$ decides on its value in iteration $i$. Thus, the expected size after iteration $i$ is the same as if nodes with color $i$ decided with the same protocol sequentially. 
It remains to show that the expected size does not increase by more than $1/n^7$ by fixing the biased coin of a single node---without the rounding $\alpha_{u,j}$ to $\tilde{\alpha}_{u,j}$ there would not be an increase at all. Let the random variable $Z$ be the size of the CFDS after the second phase, for $u\in V$ let the random variable $Z_u$ be the value of node $u$ after the second phase and  let $b_v$ be the value that $v$ chose for $X_v$ in (\ref{eqn:decision}).  We show that the expected size of the dominating set increases at most by  $1/n^8$ by the choice $X_v=b_v$.
\begin{align*}
  \mathbb{E}[Z \mid \theta \land X_v = b_v] - \mathbb{E}[Z \mid \theta]
  &= \sum_{u \in V} \left( \mathbb{E}[Z_u \mid \theta \land X_v = b_v] - \mathbb{E}[Z_u \mid \theta] \right) \\
  & \overset{(*)}{=} \sum_{u \in N(v)}\left( \mathbb{E}[Z_u \mid \theta \land X_v = b_v] - \mathbb{E}[Z_u \mid \theta]\right)\\
  & = \mathbb{E}\left[\sum_{u \in N(v)} Z_u \mid \theta \land X_v = b_v\right] - \mathbb{E}\left[\sum_{u \in N(v)} Z_u \mid \theta\right]\\
  & = A_{v, b_v} - \mathbb{E}\left[\sum_{u \in N(v)} Z_u \mid \theta\right]\\
  &\leq \frac{1}{n^9} + \widetilde{A}_{v, b_v} - \mathbb{E}\left[\sum_{u \in N(v)} Z_u \mid \theta\right]\\
  &\stackrel{(**)}{=} \frac{1}{n^9} + \min\left\{\widetilde{A}_{v, 0}, \widetilde{A}_{v, 1}\right\} - \mathbb{E}\left[\sum_{u \in N(v)} Z_u \mid \theta\right]\\
 &\leq  \frac{2}{n^9} + \min\left\{A_{v, 0}, A_{v, 1}\right\} - \mathbb{E}\left[\sum_{u \in N(v)} Z_u \mid \theta\right]\\
 %&=  \frac{2}{n^9} + \min\left\{A_{v, 0}, A_{v, 1}\right\} - \left(p_v \cdot A_{v,1} + (1- p_v) A_{v,0} \right)\\
 &\stackrel{(***)}{\leq}  \frac{2}{n^9} \leq \frac{1}{n^8}~,
\end{align*}
where $(*)$ follows as $X_v$ only influences the direct neighbors of $v$ (for all other nodes we have $\mathbb{E}[Z_u \mid \theta \land X_v = b_v] - \mathbb{E}[Z_u \mid \theta] = 0$), 
$(**)$ follows due to the choice of $b_v$
and $(***)$ holds by the law of total expectation.
Thus the total increase of the dominating set until all nodes have determined their value is at most $n\cdot\frac{1}{n^8}\leq \frac{1}{n^7}$. 
The fractionality of the resulting dominating set follows from the fractionality of the abstract randomized rounding algorithm.	
\end{proof}

We want to use \Cref{lem:derandomizationDelta} and the one shot or the two factor rounding process from \Cref{sec:derandomizationN} to increase the fractionality of a dominating set. However, if we applied \Cref{lem:derandomizationDelta} to $G$ itself we would inherently have a runtime of $O(\Delta^2)$ rounds as we need $O(\Delta^2)$ colors in a distance two coloring of $G$. Instead we apply \Cref{lem:derandomizationDelta} to a graph that we obtain by modifying the so called bipartite representation of $G$. 
\paragraph{Bipartite Representation:}%For every graph $G=(V,E)$ with a constraint fractional dominating set $(x',c')$ there is a bipartite graph $B_G$ and a constraint fractional dominating set $(\tilde{x}',\tilde{c}')$ that naturally correspond to $G$ and $(x',c')$: 
The \emph{bipartite representation}  $B_G$ of a graph $G=(V,E)$ and a CFDS $(x,c)$ is given through a slight adaption of the \emph{bipartite double cover} of $G$ together with a CFDS $(\tilde{x},\tilde{c})$. The vertex set of $B_G$ consists of two copies $V_L$ and $V_R$ of $V$. Two nodes $u\in V_L, u'\in V_R$ are connected in $B_G$ if and only if their counterparts in $V$ are connected in $G$ or if they stem from the same node.  The CFDS $(\tilde{x},\tilde{c})$ of $B_G$ is given through
\begin{align*}
\tilde{x}(v) = \left\{\begin{array}{lr}
        x(v) & v\in V_R\\
        0, & v\in V_L 
        \end{array}\right. 
				\hspace{1cm}
\tilde{c}(v) = \left\{\begin{array}{lr}
        0 & v\in V_R\\
        c(v), & v\in V_L. 
        \end{array}\right. 
\end{align*}
This natural identification can be seen as splitting each node $v\in V$ into two nodes, one node in $V_L$ that takes care of the constraint of $v$ and one node in $V_R$ that takes the value of $v$. 
The input to the algorithms in the upcoming \Cref{lem:deltaDetOneRounding,lem:deltaDetTwoRounding} is  a graph $G$ together with a constraint fractional dominating set $(x',c')$. First, the algorithms compute the bipartite representation $B_G$, $(\tilde{x}',\tilde{c}')$. Then they modify  $B_G$ and  $(\tilde{x}',\tilde{c}')$ to obtain a new  bipartite graph $\mathcal{B}$ with a new CFDS $(x,c)$---the modification consists of the removal of edges, splitting of nodes, changing constraints and so on. The (virtual) graph $\mathcal{B}$ admits a distance two coloring with few colors and at the same time ensures small $\Pr(E_v=1)$ for all $v\in V(\mathcal{B})$ which is important to obtain a small dominating set in the end. 
Then we apply \Cref{lem:derandomizationDelta} to this modified graph $\mathcal{B}$ and a suitable chosen instantiation of the abstract rounding process from \Cref{sec:abstractRounding}---the instantiations are almost identical to the one shot rounding and factor two rounding from \Cref{sec:derandomizationN}. The graph $\mathcal{B}$ and the instantiation of the rounding process are such that we can compute the necessary distance two coloring of nodes in $\mathcal{B}$ with $O(\Delta\polylog \Delta)$ colors and at the same time we ensure that $\Pr(E_v=1)$ is small for all nodes of $\mathcal{B}$. 
Afterwards use the resulting CFDS on $\mathcal{B}$ to construct an FDS on $G$ with the properties of the respective lemma. 
%Note that there is also a natural correspondence of the first phase of the abstract randomized rounding process defined in \Cref{} on the bipartite doublecover $B_G$. Only nodes on the right hand side have fractional values and execute the rounding process to update their values. Note that this algorithm can be executed on any bipartite graph $B$, given a constraint fractional dominating set, and it is not necessary that $B$ is the actual bipartite double cover of some graph $G$.
%In the second phase of the abstract randomized rounding process nodes $v\in V_L$ whose constraint is violated update the value of their respective counterpart on the right hand side to $c(v)$. \ym{This does not make sense, as we modify the bipartite graph it might not correspond to a graph anymore}
To make the derandomization work we need the following two resutls.

\begin{theorem}{Chernoff Bound \cite{concentration09}}\label{thm:chernoffBound}
 Let $\lambda>0$ and $X=\sum_{i=1}^nX_i$ where $X_i, 1\leq i\leq n$ are independently distributed in $[0,\lambda]$. Then for $\delta>0$
we have $\Pr(X<(1-\delta)\E[X])\leq e^{-\frac{\delta^2}{2\lambda}\E[X]}$~.
\end{theorem}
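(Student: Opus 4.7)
The plan is to apply the standard Chernoff/exponential-moment argument, specialized to random variables supported on $[0,\lambda]$. Let $\mu=\E[X]$ and fix a parameter $t>0$ to be optimized later. Since $X<(1-\delta)\mu$ is equivalent to $e^{-tX}>e^{-t(1-\delta)\mu}$, Markov's inequality applied to the nonnegative random variable $e^{-tX}$ gives
\[
\Pr\bigl(X<(1-\delta)\mu\bigr)\;\leq\;e^{t(1-\delta)\mu}\,\E\bigl[e^{-tX}\bigr].
\]

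Next I would use independence to factor the moment generating function as $\E[e^{-tX}]=\prod_{i=1}^n\E[e^{-tX_i}]$, and bound each factor using the convexity of $x\mapsto e^{-tx}$: on the interval $[0,\lambda]$, the function lies below the chord connecting its endpoints, so $e^{-tx}\leq 1-(x/\lambda)(1-e^{-t\lambda})$. Taking expectations and using $1-y\leq e^{-y}$ yields $\E[e^{-tX_i}]\leq \exp\!\bigl(-(\E[X_i]/\lambda)(1-e^{-t\lambda})\bigr)$. Multiplying over $i$ and combining with the Markov step, and substituting $s=t\lambda$,
\[
\Pr\bigl(X<(1-\delta)\mu\bigr)\;\leq\;\exp\!\left(\tfrac{\mu}{\lambda}\bigl(s(1-\delta)-(1-e^{-s})\bigr)\right).
\]

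Differentiating the exponent in $s$ gives the minimizer $s=-\ln(1-\delta)$ (valid since $\delta\in(0,1)$), and plugging this back in reduces the exponent to $\tfrac{\mu}{\lambda}\bigl(-\delta-(1-\delta)\ln(1-\delta)\bigr)$. The last step is the elementary inequality $(1-\delta)\ln(1-\delta)+\delta\geq \delta^2/2$ for $\delta\in[0,1)$, which follows by noting that both sides vanish at $\delta=0$ and that the derivative of the difference, $-\ln(1-\delta)-\delta$, is nonnegative since $\ln(1-\delta)\leq-\delta$. This yields the claimed bound $\exp(-\delta^2\mu/(2\lambda))$.

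For the regime $\delta\geq 1$ (also allowed by the hypothesis $\delta>0$), the statement is trivial: $X\geq 0$ forces $\Pr(X<(1-\delta)\mu)=0$. Since the result is a textbook Chernoff bound cited from \cite{concentration09}, I do not anticipate any real obstacle; the only mildly delicate step is the convexity-based scalar bound on $\E[e^{-tX_i}]$, which is exactly the standard trick that handles the $[0,\lambda]$ range in place of the usual $\{0,1\}$ Bernoulli case.
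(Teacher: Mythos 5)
The paper does not prove this statement at all—it imports it verbatim from \cite{concentration09}—so there is no internal proof to compare against. Your derivation is the standard lower-tail Chernoff argument (Markov applied to $e^{-tX}$, the chord/convexity bound $e^{-tx}\leq 1-(x/\lambda)(1-e^{-t\lambda})$ on $[0,\lambda]$, optimizing at $s=-\ln(1-\delta)$, and the elementary inequality $(1-\delta)\ln(1-\delta)+\delta\geq \delta^2/2$), and it is correct, including the trivial case $\delta\geq 1$; this is exactly how the cited textbook result is established.
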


\begin{lemma}[Bipartite Coloring]
\label{lem:bipatiteColoring}
   There is a deterministic \CONGEST algorithm that, given a bipartite graph $B=(V_L\cup V_R, E)$  computes a distance two coloring of $V_R$ with $O(\Delta_L \cdot \Delta_R)$ colors in $O(\Delta_L \cdot \Delta_R + \Delta_L \cdot \log^*n)$ round where $\Delta_L$ and $\Delta_R$ are the maximum degree of nodes in $V_L$ and $V_R$, respectively. 
\end{lemma}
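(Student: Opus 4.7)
The plan is to reduce the problem to an ordinary vertex coloring on the conflict graph $H=(V_R,E_H)$ in which $\{u,v\}\in E_H$ iff $u,v\in V_R$ share a common neighbor in $V_L$, since a proper coloring of $H$ is exactly a distance two coloring of $V_R$ in $B$. Each $V_R$-node has at most $\Delta_R$ neighbors in $V_L$, and each of those contributes at most $\Delta_L-1$ further $V_R$-nodes to $H$, so $\Delta(H)\le\Delta_R(\Delta_L-1)\le\Delta_L\Delta_R-1$ and hence $\Delta_L\Delta_R$ colors suffice for a proper coloring of $H$.

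A single round of communication on $H$ can be simulated inside $B$ in $O(\Delta_L)$ rounds: each $V_R$-node first sends its message to all of its $V_L$-neighbors in one $B$-round, and then every $V_L$-node forwards the received messages to its at most $\Delta_L$ $V_R$-neighbors, which costs $O(\Delta_L)$ $B$-rounds due to the $O(\log n)$ message size. With this simulation I would first execute an iterated Linial-style palette reduction on $H$ starting from the trivial ID-coloring, obtaining an $O(\Delta(H)^2)$-coloring in $O(\logstar{n})$ $H$-rounds, i.e.\ in $O(\Delta_L\cdot\logstar{n})$ $B$-rounds; this accounts for the $\Delta_L\cdot\logstar{n}$ term of the claimed bound.

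Afterwards, I would shrink the palette from $O((\Delta_L\Delta_R)^2)$ colors down to $\Delta_L\Delta_R$ colors by processing color classes one at a time in the standard Barenboim--Elkin style reduction. The key structural observation driving efficiency here is that, because the working coloring is always proper in $H$, the nodes of the color currently being eliminated form an independent set of $H$; in particular, within every $V_L$-neighborhood at most one $V_R$-node is ever active at a time. This bounds the congestion each $V_L$-relay has to handle and lets several color classes whose active $V_R$-nodes are $V_L$-disjoint be processed in parallel.

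The main obstacle will be carrying out this second phase in $O(\Delta_L\Delta_R)$ $B$-rounds rather than the $O(\Delta_L\cdot(\Delta_L\Delta_R)^2)$ that a line-by-line simulation of the $H$-algorithm would give. To beat the relay overhead I would exploit the independent-set property just mentioned: an active $V_R$-node queries its $V_L$-relays for forbidden colors only once per block of eliminated color classes, the relays reply using compact aggregated information rather than enumerating individual colors, and color classes whose active sets share no $V_L$-neighbor are scheduled simultaneously. Choosing the right batching schedule so that all necessary information fits into $O(\log n)$-bit messages and the cost amortizes to a constant number of $B$-rounds per removed color is the delicate technical step I expect to dominate the proof.
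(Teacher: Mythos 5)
Your overall route is the same as the paper's: both build the conflict graph $H$ on $V_R$ (two $V_R$-nodes adjacent iff they share a $V_L$-neighbor), bound $\Delta(H)\le \Delta_L\Delta_R-1$, and simulate a distributed coloring algorithm on $H$ inside $B$ at a cost of $O(\Delta_L)$ rounds of $B$ per round of $H$. The difference — and the genuine gap — is in how the coloring of $H$ is obtained. The paper finishes in one line by running an existing $(\Delta+1)$-coloring algorithm with round complexity $O(\Delta+\log^* n)$ on $H$ \cite{BEK15,BEG18}; you instead try to rebuild the palette reduction by hand (Linial to $O(\Delta(H)^2)$ colors, then a Barenboim--Elkin-style reduction) and leave the decisive step — amortizing the reduction phase to $O(1)$ $B$-rounds per eliminated color — as a hope rather than an argument. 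As sketched, that plan has concrete obstacles: a $V_L$-relay that must report which colors from a palette of size $\Theta((\Delta_L\Delta_R)^2)$ (or even $\Theta(\Delta_L\Delta_R)$ after the first shrinking) occur among its up to $\Delta_L$ $V_R$-neighbors cannot in general do so with "compact aggregated information" in $O(1)$ \CONGEST rounds, and "processing in parallel those color classes whose active sets share no $V_L$-neighbor" is circular, since deciding which classes are $V_L$-disjoint is essentially the distance-two coloring you are trying to compute. So the proposal as written does not establish the claimed round complexity; the missing ingredient is simply the known linear-in-$\Delta$ coloring result, which is exactly what the paper plugs in.

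One honest caveat in your favor: even with that black box, a literal round-by-round simulation costs $O(\Delta_L\cdot\Delta(H)+\Delta_L\log^* n)=O(\Delta_L^2\Delta_R+\Delta_L\log^* n)$, i.e., an extra $\Delta_L$ factor on the first term relative to the lemma statement, a point the paper's two-line proof does not address (it is harmless for its applications, where $\Delta_L=\polylog\Delta$, and can be removed with a more careful simulation of the color-reduction phase). So your instinct that matching $O(\Delta_L\Delta_R)$ exactly requires extra care is not wrong — but the proof the lemma is meant to have is "simulate \cite{BEK15} or \cite{BEG18} on $H$", not a from-scratch batching scheme, and without either that citation or a worked-out amortization argument your write-up does not yet prove the statement.
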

\begin{proof}
Let $G=(V_R, F)$ be the graph on $V_R$ that is obtained by connecting any two nodes in $V_R$ that have a common neighbor in $B$.
The coloring of $B$ is obtained by computing an $O(\Delta_L\cdot \Delta_R)$-coloring of $G$: First compute a $O((\Delta_L\cdot \Delta_R)^2)$-coloring of $G$ with Linial's algorithm. One round of this algorithm on $G$ can be simulated in $O(\Delta_L)$-rounds in $B$ yielding a runtime of $O(\Delta_L\cdot \log^*n)$. Afterwards run the \CONGEST algorithm from \cite{BEG18} on $G$ yielding a $O(\Delta_L\cdot \Delta_R)$-coloring in $O(\Delta_L\cdot \Delta_R)$ rounds. The algorithm of \cite{BEG18} can be reformulated such that every node $v\in V_R$ uses the  $O((\Delta_L\cdot \Delta_R)^2)$-coloring to compute (without communication) a sequence  $p_v(0),p_v(1),\ldots,p_v(q-1)$ of colors of length $q=O(\Delta_L\cdot \Delta_R)$. If $v$ is still uncolored at the beginning of round $i$ it tries to obtain color $c=p_v(i)$. To do so it tests whether any neighbor in $G$ tries to get color $c$ in round $i$ or is already colored with $c$, if not it keeps color $c$, otherwise it discards the color and tries again in the next round. The sequences are chosen such that every node is colored after $q$ rounds. The respective 'test' can be performed in the communication network $B$ in $O(1)$ rounds by letting nodes in $V_L$ performing the test for their neighbors in $V_R$. 
\end{proof}

\begin{restatable}[Deterministic One Shot Rounding]{lemma}{lemDeltaDetOneRounding}
  \label{lem:deltaDetOneRounding}
Let $F>0$ be an integer. There is a determ. algorithm that,
given a graph $G$ with a $1/F$-fractional dominating set of size $A$,
 computes an integral dominating set with size at most $\ln \Deltab\cdot A+n/\Deltab+n^{-6}$ and has round complexity $O(F\cdot \Delta+F\cdot \logstar n)$ in the \CONGEST model and round complexity $O(F\cdot \Delta+\logstar n)$ in the \LOCAL model.
\end{restatable}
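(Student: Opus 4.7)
We combine the one-shot instantiation of the abstract rounding process with the coloring-based derandomization \Cref{lem:derandomizationDelta}, applied to a carefully designed bipartite modification $\mathcal{B}$ of $G$ whose distance-two coloring needs only $O(F\Delta)$ colors.

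Given the input $1/F$-fractional dominating set $x'$ of size $A$, define the scaled values $x(v):=\min\{1, x'(v)\ln\Deltab\}$, of total size at most $\ln\Deltab\cdot A$. Exploiting $1/F$-fractionality, each node $v$ deterministically selects a set $S_v\subseteq N(v)$ with $|S_v|\leq F$ and $\sum_{u\in S_v}x'(u)\geq 1$: if $v$'s support contains at most $F$ nonzero neighbors take the whole support (whose sum is already at least $1$), otherwise take any $F$ nonzero neighbors (each contributes at least $1/F$, so the sum is at least $1$). Build $\mathcal{B}$ on two copies $V_L,V_R$ of $V$, placing an edge $v_L u_R$ iff $u\in S_v$, and set $x(u_R)$ as above, $x(v_L)=0$, $c(u_R)=0$, $c(v_L)=1$, $p(u_R)=x(u_R)$, $p(v_L)=0$. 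Then $(x,c)$ is a CFDS on $\mathcal{B}$ of size at most $\ln\Deltab\cdot A$, for which the one-shot rounding is the relevant instantiation of the abstract process.

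In $\mathcal{B}$, constraint-side nodes have degree at most $F$ and value-side nodes at most $\Deltab$, so \Cref{lem:bipatiteColoring} yields a distance-two coloring of $V_R$ with $O(F\Delta)$ colors in $O(F\Delta + F\logstar n)$ \CONGEST rounds; in the \LOCAL model, running an $(F\Delta+1)$-coloring algorithm on the conflict graph of $V_R$ instead gives $O(F\Delta+\logstar n)$ rounds. Feed this coloring and $(\mathcal{B},x,c,p)$ into \Cref{lem:derandomizationDelta}, which outputs a deterministic integer CFDS on $\mathcal{B}$ of size at most $\ln\Deltab\cdot A + \sum_{v\in V(\mathcal{B})}\Pr(E_v=1) + n^{-7}$. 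Each $v_L$ sees scaled mass $\sum_{u\in S_v}x(u)\geq \ln\Deltab$, so the argument of \Cref{lem:oneshotUncover} gives $\Pr(E_{v_L}=1)\leq 1/\Deltab$, and value nodes contribute nothing as $c(u_R)=0$; the total sum is therefore at most $n/\Deltab$.

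To conclude, map back to $G$ by putting $v$ in the output DS iff $X_{v_R}=1$ or $v_L$ joined in the second phase. Validity is immediate: if $v_L$ is covered in $\mathcal{B}$, then some $u\in S_v\subseteq N(v)$ has $X_{u_R}=1$ and is in the DS (covering $v$); otherwise $v$ itself is added. The $G$-DS size is upper bounded by the $\mathcal{B}$-CFDS size, giving $\ln\Deltab\cdot A + n/\Deltab + n^{-6}$ after absorbing the $n^{-7}$ slack and any rounding to transmittable values. The running time is dominated by the coloring step, matching the two claimed bounds. The main design difficulty is the choice of the sets $S_v$: they must drop constraint-side degrees to $F$ (reducing the coloring complexity from $\Theta(\Delta^2)$ to $O(F\Delta)$) while preserving enough mass at every $v_L$ for \Cref{lem:oneshotUncover} to apply---this is exactly what $1/F$-fractionality affords.
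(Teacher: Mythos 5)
Your proposal is correct and follows essentially the same route as the paper: scale the values by $\ln\Deltab$, pass to the bipartite representation with constraint-side degrees pruned to $F$ via covering sets, bound the uncover probability by $1/\Deltab$ as in \Cref{lem:oneshotUncover}, color $V_R$ with $O(F\Delta)$ colors via \Cref{lem:bipatiteColoring}, derandomize with \Cref{lem:derandomizationDelta}, and map back to $G$. The only (immaterial) differences are bookkeeping choices such as setting $p(v_L)=0$ versus $p(v)=x(v)$ on the constraint side and your slightly more explicit construction of the sets $S_v$.
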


\begin{proof}
To apply \Cref{lem:derandomizationDelta} efficiently we first construct a new (virtual) graph $\mathcal{B}$ that admits a distance two coloring with few colors and at the same time ensures small $\Pr(E_v=1)$ for all $v\in V(\mathcal{B})$. 

  \mypara{Constructing Graph $\mathcal{B}$:} 
Let $x'(v)$ denote the values of the $1/F$-fractional dominating set. First, set $x(v)=\min\{1,\ln \Deltab \cdot x'(v)\}$ and set $c(v)=1$ for all nodes.  Note that we actually increase each $x(v)$ such that it is transmittable which increases each value by at most $n^{-10}$ and therefore, summing over all nodes, will give an overall increase to the resulting FDS of at most $n^{-9}$. 
Then we build the Bipartite Representation $B_G, (x,c)$ (reuse of notation for the dominating set) of $G$ with $(x,c)$. For every node the fractional value is either $0$ or at least $1/F$. Therefore for each node $v$ on the left hand side, there is a set of at most $F$ nodes that cover $v$. For each node we find such a set. We then remove all edges that connect the node to nodes outside the set. With this we reduce the degree on the left hand side to $F$.

		\mypara{Instance of the Abstract Rounding Algorithms:} 
Let  $p(v)=x(v)$ for all $v\in V(\mathcal{B})$ and consider the abstract randomized rounding process with these values on $\mathcal{B}$. % to be equal to $$ which leads to an integral dominating set of size $\ln \Delta \cdot A + \sum_{v \in V} \Pr(E_v = 1) + \frac{1}{n^8}$. 

\mypara{Bounding \boldmath$\Pr(E_v=1)$:}
For $v\in V(\mathcal{B}$ let $E_v$ be the random variable that is one if $v$'s constraint is violated after the first phase of the abstract randomized rounding process and zero otherwise. If $v$ is on the right hand side of $\mathcal{B}$ its constraint is zero and $\Pr(E_v=1)=0$. So let $v$ be a node with $c(v)=1$. If $v$ has a neighbor $u$ with $x(u)=1$ then $v$'s constraint is satisfied. Otherwise we have $x(u)\geq \ln\Deltab \cdot x'(u)$ for all neighbors of $v$ in $\mathcal{B}$. Thus we can bound the probability that $v$ is uncovered after the first step as follows. 
\begin{align*}
  \mathbb{P}(E_v) &= \prod_{u \in N(v)} (1-x_{u} \ln \Deltab) 
                                  \leq \prod_{u \in N(v)} \mathrm{e}^{-x_{u} \ln \Deltab} \\
                                  & = \mathrm{e}^{\sum_{u \in N(v)} -x_{u} \ln \Deltab} 
                                  = \mathrm{e}^{-\ln (\Deltab) \sum_{u \in N(v)} x_{u} } 
                                  \overset{(*)}{\leq} \mathrm{e}^{-\ln \Deltab} 
                                  = \frac{1}{\Deltab}~.
 \end{align*}
 The inequality $(*)$ holds since the sum over all $x'(u)$ in the closed neighborhood of $v$ is at least 1 because we used a valid fractional dominating set as an input.
% As all these decisions are independent, we can conclude that, due to the linearity of expectation, the expected amount of nodes joining the dominating set in the second step is at most $\frac{n}{\Delta}$.

\mypara{Applying the Derandomization Lemma and Building FDS on \boldmath$G$:}
Using \Cref{lem:bipatiteColoring} we can color the right hand side nodes $V_R$ with $O(F\cdot \Delta)$ colors in $O(F\cdot \Delta+F\cdot\logstar n)$ rounds. We can then use this coloring together with \Cref{lem:derandomizationDelta} to generate a dominating set of $\mathcal{B}$ of size
\[\ln \Deltab \cdot A + \sum_{v \in V} \Pr(E_v = 1) + \frac{1}{n^8}+\frac{1}{n^9}\leq \ln \Deltab \cdot A + \frac{n}{\Deltab} + \frac{1}{n^7}~.\]
Here the $n^{-9}$ term is due to the rounding of the values to transmittable values that we performed at the beginning.
The FDS on $\mathcal{B}$ induces an FDS on $G$ by reverting the bipartite representation where a node sets its value to the maximum of the values of its two copies. 
 The fractionality of the resulting dominating set follows from the fractionality of the abstract randomized rounding algorithm.
The result in the \LOCAL model follows as the required distance two coloring of $V_R$ can be computed in $O(F\cdot \Delta+\logstar n)$ rounds, e.g., with the algorithm of \cite{BEK15}. 
\end{proof}

\begin{restatable}[Deterministic Factor Two Rounding]{lemma}{lemDeltaDetTwoRounding}
  \label{lem:deltaDetTwoRounding}
Let $\eps>0$, $r\geq 256\eps^{-3}\log \Deltab$. There is a deterministic \CONGEST algorithm that, given
a graph $G$ with a $1/r$-fractional dominating set of size $A$,
computes a $\min\{2/r,1/F\}$-fractional dominating set with size $(1+\eps)A+n/\Deltab^4+n^{-6}$ and has round complexity $O(\eps^{-2}\Delta\cdot \log \Delta +\eps^{-2}\log \Delta\cdot \logstar n)$ in the \CONGEST model and round complexity $O(\Delta\cdot \polylog \Delta +\logstar n)$ in the \LOCAL model.
\end{restatable}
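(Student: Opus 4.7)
The proof parallels that of Lemma~\ref{lem:deltaDetOneRounding}: build a modified bipartite graph $\mathcal{B}$, instantiate the abstract rounding process on $\mathcal{B}$, bound $\sum_v\Pr(E_v=1)$, apply Lemma~\ref{lem:derandomizationDelta}, and project back to $G$. The two substantive differences are that the instantiation is the factor two rounding (so $p(v)\in\{1/2,1\}$ on the value side) and that the uncover probability is bounded by Lemma~\ref{lem:factorTwoUncover} instead of Lemma~\ref{lem:oneshotUncover}.

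First, I would scale the input by $(1+\eps)$, setting $x(v)=\min\{1,(1+\eps)x'(v)\}$ and then rounding each $x(v)$ up to the nearest transmittable multiple of $2^{-\iota}$, which inflates the total mass by at most $1/n^{9}$. Next, form the bipartite representation $B_G$ with the lifted CFDS $(\tilde{x},\tilde{c})$. Since the input is $1/r$-fractional, for every constraint node $v\in V_L$ the neighborhood contains at most $r$ covering nodes whose $x'$-values already sum to at least $1$; select such a subset $T_v$ of size $\leq r$ and delete every edge from $v$ to a neighbor outside $T_v$. Call the pruned bipartite graph $\mathcal{B}$; it has left-side degree $\Delta_L\leq r$ and right-side degree $\Delta_R\leq\Deltab$.

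Instantiate the factor two rounding on $\mathcal{B}$: set $p(v)=0$ for $v\in V_L$ (its value is $0$), $p(v)=1$ for $v\in V_R$ with $x(v)\geq 2/r$, and $p(v)=1/2$ for the remaining $v\in V_R$. Only nodes of the last group lie in the set $S$ of Lemma~\ref{lem:derandomizationDelta}, and they all sit on the right side of $\mathcal{B}$, so Lemma~\ref{lem:bipatiteColoring} furnishes a distance-two $O(\Delta_L\Delta_R)=O(r\Delta)$-coloring of them in $O(r\Delta+r\log^{*}n)$ \CONGEST rounds. In the \LOCAL model, the same coloring is obtained in $O(r\Delta+\log^{*}n)$ rounds by substituting the distance-two coloring subroutine with its \LOCAL counterpart.

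Feeding this coloring into Lemma~\ref{lem:derandomizationDelta} yields a deterministic CFDS on $\mathcal{B}$ of size at most $(1+\eps)A+\sum_{v\in V(\mathcal{B})}\Pr(E_v=1)+1/n^{7}+1/n^{9}$. Only the constraint side $V_L$ contributes nontrivially to the sum, and for each $v\in V_L$ the surviving neighborhood $T_v$ still carries total $x'$-mass $\geq 1$; this is exactly the hypothesis the proof of Lemma~\ref{lem:factorTwoUncover} needs to give $\Pr(E_v=1)\leq\Deltab^{-4}$, so $\sum_v\Pr(E_v=1)\leq n/\Deltab^{4}$. Projecting the CFDS on $\mathcal{B}$ back to $G$ by assigning each node the maximum of its two copies preserves the size (bounded by the sum of both copies), ensures every original constraint is met (the surviving $T_v$-mass lies in $N_G^{+}(v)$, or $v$ itself has value $1$), and leaves every non-zero value in $\{2x(v)\geq 2(1+\eps)/r,\; x(v)\geq 2/r,\; 1\}$, hence $\geq 2/r$, which gives the claimed fractionality.

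The main obstacle is dropping the left-side degree from $\Delta$ to $r=\polylog\Delta$ without destroying the Chernoff-type concentration of Lemma~\ref{lem:factorTwoUncover}. This works precisely because that concentration argument only uses the fact that the surviving neighborhood has $x'$-mass $\geq 1$, a property preserved by the $r$-neighbor restriction. The alternative of distance-two coloring the full bipartite graph would cost $\Omega(\Delta^{2})$ colors; restricting to $T_v$ collapses this to $O(r\Delta)=O(\Delta\polylog\Delta)$, which is what lets the total runtime come out to $O(\Delta\polylog\Delta+\polylog\Delta\cdot\log^{*}n)$.
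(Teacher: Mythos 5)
There is a genuine gap, and it is exactly at the step you flag as the ``main obstacle.'' Your construction prunes each constraint node $v\in V_L$ down to a covering set $T_v$ of size at most $r$, so the left-side degree of $\mathcal{B}$ is only bounded by $r$, and \Cref{lem:bipatiteColoring} then needs $O(r\Delta)$ colors and $O(r\Delta+r\log^* n)$ rounds. But $r$ is \emph{not} $\polylog\Delta$ in general: the lemma only assumes $r\geq 256\eps^{-3}\log\Deltab$ (a lower bound), and in the intended use inside \Cref{thm:mainThmDelta} the factor-two rounding is applied starting from the $\eps/(2\Delta)$-fractional solution of \Cref{lem:fractionalApproximation}, i.e.\ with $r=\Theta(\Delta/\eps)$. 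With your pruning the first iterations would therefore cost $O(\Delta^2/\eps)$ rounds, which is precisely the $\Delta^2$ barrier this lemma is designed to avoid, and it does not match the claimed complexity $O(\eps^{-2}\Delta\log\Delta+\eps^{-2}\log\Delta\cdot\logstar n)$ (note also the $\eps^{-2}$ versus your $\eps^{-3}$ even when $r$ sits at its lower bound). Pruning to a covering set of size equal to the reciprocal fractionality is the right move for the one-shot rounding (\Cref{lem:deltaDetOneRounding}), where it is only invoked once the fractionality is $1/\polylog\Delta$; it does not work for the factor-two step.

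The missing idea is the paper's \emph{splitting} of each constraint node rather than pruning it: $v\in V_L$ is replaced by nodes $v_1,\dots,v_k$, where $v_1$ collects all edges to non-participating neighbors (those with $x(u)\geq 2/r$, which have $p(u)=1$ and need no color) plus at most $s$ participating ones, and each $v_j$, $j\geq 2$, gets between $s$ and $2s$ participating neighbors, with $s=64\eps^{-2}\ln\Deltab$ independent of $r$. The left degree relevant for the distance-two coloring is then $O(s)$, giving $O(s\Delta)=O(\eps^{-2}\Delta\log\Delta)$ colors. The price is that your concentration argument (``surviving mass $\geq 1$,'' as in \Cref{lem:factorTwoUncover}) no longer applies to a split node, whose neighborhood carries mass only $\Theta(s/r)\ll 1$; the paper repairs this by assigning each split node the constraint $c(v_j)=\sum_{u\in N_{\mathcal{B}}(v_j)}x'(u)\geq s/r$ and applying a multiplicative Chernoff bound relative to $c(v_j)$ with range $[0,4/r]$, which works precisely because all participating values lie in $[1/r,2/r)$ -- this is why the rounding must be gradual. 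For $v_1$-type nodes one instead argues deterministically that the non-participating neighbors alone cover the constraint since $(1+\eps)(1-2s/r)\geq 1$. Without this splitting-plus-reweighted-constraints step, the stated round complexity cannot be obtained. (A minor additional nit: you set $p(v)=0$ on $V_L$, whereas the derandomization framework expects $p(v)=1$ there, so that $x(v)/p(v)$ is well defined and these nodes deterministically keep value $0$.)
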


\begin{proof} The fractionality of the input dominating set is increased if all nodes with value $<2/r$ double their value with probability $1/2$ and set it to $0$ otherwise. 
This can be achieved through an instance of the abstract randomized rounding process and one could use \Cref{lem:derandomizationDelta} to derandomize this process. However, \Cref{lem:derandomizationDelta} requires that all nodes that actually take part in the rounding process are distance two colored. Such a coloring of $G$ would have $O(\Delta^2)$ colors and the runtime of the derandomization lemma crucially depends on the number of colors. Thus, we first move to the bipartite representation of $G$, remove edges and split nodes of this representation and obtain a graph $\mathcal{B}$. Then we derandomize the abstract rounding process on $\mathcal{B}$. The properties of $\mathcal{B}$ ensure 
1) that we only need $\Delta\polylog\Delta$ colors in a distance two coloring of the nodes yielding a $\Delta\polylog \Delta$ runtime and 2) that the probability of a node being uncovered in the randomized rounding process is small; the latter is important as the approximation factor of the FDS at the end of derandomization depends on these probabilities.

  \mypara{Constructing Graph $\mathcal{B}$:}
  Let $x'(v)$ denote the value of the input $1/r$-fractional dominating set. Let $x(v)=\min\{1,(1+\eps)x'(v)\}$. We increase each value of $x$ such that it is transmittable, this increases each value by at most $\frac{1}{n^{10}}$ and therefore gives an overall increase to the FDS of at most $\frac{1}{n^9}$.
  We define the constraints of the graph by $c(v) = 1$, and let $s=64\eps^{-2} \ln \Deltab$. Next, we modify the bipartite Representation $B_G$ of $G$ to obtain a bipartite graph $\mathcal{B} = (U_L \cup U_R, E_\mathcal{B})$. Begin with the bipartite representation $B_G=(V_L\cup V_R, E)$ of $G$. Each node $v \in V_L$ is represented by $k$ nodes $v_1, \dots, v_k$ in $U_L$. The edges of $v$ are distributed among these nodes as follows: $v_1$ is connected to all nodes in $U_R$ that correspond to nodes $u \in V_R$ with $\{v,u\}\in E$ and $x(u)\geq 2/r$~. 
  If the number of remaining edges of $u$ is less than $s$ all of them are also given to $v_1$ (in this case we have $k=1$). Otherwise the remaining edges of $v$ are split between $v_2,\ldots,v_k$ such that each node gets between $s$ and $2s$ edges. Note that all edges in $\mathcal{B}$ are edges of the original graph and any \CONGEST algorithm on $\mathcal{B}$ can be executed without any overhead in the original network graph $G$. For $v\in U_R$ we set $c(v)=0$ and for $v\in U_L$ we set the constraints such that they are satisfied by the $x'$ values of their neighbors meaning $c(v)=\max\big\{1,\sum_{u\in N_{\mathcal{B}}(v)}x'(u_G)\big\}$ where $u_G$ is the node that corresponds to $u$ in $G$~. For the fractional values we define $x(v) = 0$ for $v\in U_L$ and $x(v) = x(v_G)$ for $v \in U_R$.
  
		\mypara{Instance of the Abstract Rounding Algorithms:}
	We define the following instance of the abstract randomized rounding algorithm on $\mathcal{B}$. For $v\in U_L$ we set $p(v)=1$. For $v\in U_R$ we set
\begin{align}
  p(v) = \left\{\begin{array}{lr}
			        \frac{1}{2}, & \text{if $x(v)<2/r$,}\\
        1, & \text{if $x(v)\geq 2/r$}~. 
  \end{array}\right.
\end{align}

\mypara{Bounding \boldmath$\Pr(E_v=1)$:}
For $v\in U_L\cup U_R$ let $E_v$ be the random variable that is one if $v$'s constraint is violated after the first phase of the abstract randomized rounding process and zero otherwise. If $v\in U_R$ we clearly have $\Pr(E_v=1)=0$ as $c(v)=0$. For a $v \in U_L$ we have the following cases: 

In the first case $v$ is a $v_1$ type node, meaning all neighbors of $v_B$ are connected to $v$. 
If no $u\in N(v)$ has $x(u)<2/r$ none of the neighbors takes part in the rounding and we have $E_v=0$ as none of the neighbors change their value and $v$ was covered before the rounding process. 
Now, assume that $v_1$ has neighbors $u\in N(v)$ with $x(u)<2/r$.  Let $S=\{u\in N(v) \mid x(u)<2/r\}$ be the neighbors of $v$ that participate in the rounding process and $T=N(v)\setminus S$. As $v$ is a $v_1$ type we have $|S|\leq s$ and thus the contribution of participating nodes is at most $\sum_{u\in S}x'(u)\leq \sum_{u\in S}x(u) \leq s\cdot 2/r$.
 As $v$ is covered before the rounding we obtain that $\sum_{u\in T}x'(u)\geq 1-2s/r$. 
If $x(u)=1$ for some $u\in T$ we clearly have $E_v=0$, otherwise we have $x(u)\geq (1+\eps)x'(u)$ and we obtain  $\Pr(E_v=1)=0$ as 
\begin{align}
\sum_{v\in T}x(u)\geq \sum_{v\in T}(1+\eps)x'(u)\geq (1+\eps)(1-2s/r)\stackrel{(*)}{\geq} (1+\eps)\left(1-\frac{\eps}{1+\eps}\right) = 1~, 
\end{align}
where $(*)$ follows because $2s/r\leq \frac{2\cdot 64 \varepsilon^{-2} \ln \Delta}{256 \varepsilon^{-3} \ln \Delta} = \frac{\varepsilon}{2} \leq \frac{\eps}{1+\eps}$~. 

In the second case $v$ is a $v_j, j\geq 2$ type node meaning all neighbors of $v$ have a fractional value of less than $2/r$. Recall that for $u\in N(v)$ the random variable $X_u\in [0,4/r]$ denotes the value of $u$ after the first phase of the rounding process. Define $X=\sum_{u\in N(v)}X_u$ and note that due to linearity of expectation $\E[X]=c(v)(1+\eps)$. 
With $\delta=1-\frac{1}{1+\eps}=\frac{\eps}{1+\eps}$, $\eps<1$, $c(v) \geq s/r$, $\lambda=4/r$ and the Chernoff Bound from \Cref{thm:chernoffBound} we obtain that 
\begin{align*}
  \Pr(E_v =1 ) &=\Pr(X<c(v))=\Pr(X<(1-\delta)\E[X])\leq e^{-\frac{\delta^2}{2 \lambda} \E[X]} \\
               &= e^{-\frac{\delta^2}{8}\cdot r\cdot c(v)(1+\eps)} \leq e^{-\frac{\eps^2}{(1+\eps)8}\cdot r\cdot c(v)}=e^{-\frac{\eps^2}{16}\cdot r\cdot c(v)}\leq e^{-\frac{\eps^2}{16}\cdot s}= \Deltab^{-4}~.
\end{align*}
\mypara{Applying the Derandomization Lemma and Building FDS on \boldmath$G$:}
Note that for all $u\in V_L$ we have $p(u)=1$ and each $u\in V_L$ is connected to at most $O(s)$ nodes $u'$ in $V_R$ with $p(u')\notin \{0,1\}$. Thus we can use \Cref{lem:bipatiteColoring} to compute a distance two coloring of $S=\{v\in V_R \mid p(v)\notin \{0,1\}\}$ with  $O(s \cdot \Delta)=O(\eps^{-2}\log \Delta \cdot \Delta)$ colors in $O(\eps^{-2}\cdot \log\Delta\cdot \Delta + \eps^{-2}\log\Delta \cdot \log^*n)$. 

 We now use  \Cref{lem:derandomizationDelta} with graph $\mathcal{B}$, the constraint fractional dominating set $(x,c)$ with transmittable values and the distance two coloring of $S$ to obtain a FDS of $\mathcal{B}$ with size at most
\[\ln \Deltab \cdot A + \sum_{v \in V} \Pr(E_v = 1) + \frac{1}{n^8}+\frac{1}{n^9}\leq \ln \Deltab \cdot A + \frac{n}{\Deltab^4} + \frac{1}{n^7}~.\]
Here the $n^{-9}$ term is due to the rounding of the values to transmittable values that we performed at the beginning.
The FDS on $\mathcal{B}$ induces an FDS on $G$ by reverting the bipartite representation where a node sets its value to the maximum of the values of its two copies. 
 The fractionality of the resulting dominating set follows from the fractionality of the abstract randomized rounding algorithm.
The result in the \LOCAL model follows as the required distance two coloring of $V_R$ can be computed in $O(s\cdot \Delta+\logstar n)$ rounds, e.g., with the algorithm of \cite{BEK15}. 
\end{proof}

%\clearpage

\subsection{Proof of \Cref{thm:mainThmN}}
\label{sec:proofs}
We use the deterministic rounding results (\Cref{lem:oneShotRounding,lem:factorTwoRounding,lem:deltaDetOneRounding,lem:deltaDetTwoRounding}) to prove our main results.

\begin{proof}[Proofs of \Cref{thm:mainThmN}/\Cref{thm:mainThmDelta}/\Cref{corr:mainCorDelta}]
  Set $\rho=\log (\Delta/\eps)$, $\eps_1=\min\{\eps/16,1/4\}$, $\eps_2=\eps_1/(100\rho)$ and  $F= 256 \varepsilon^{-3} \log \Deltab$. Then the algorithm for all three results consist of three parts.
  \begin{itemize}
    \item \textbf{Part I:} Use \Cref{lem:fractionalApproximation} to build an initial FDS with fractionality $\eps_1/\poly \Delta$ and $(1+\eps_1)$-approximation guarantee,
    \item \textbf{Part II:} iterate (with $\eps_2$) \Cref{lem:factorTwoRounding} or \Cref{lem:deltaDetTwoRounding} $\rho$ times to increase the fractionality by a factor two until the fractionality is $F$ and
    \item \textbf{Part III:} use \Cref{lem:oneShotRounding} or \Cref{lem:deltaDetOneRounding} to round the FDS to an integral one.
  \end{itemize}
  In this \Cref{lem:factorTwoRounding} and \Cref{lem:oneShotRounding} are used to prove \Cref{thm:mainThmN}. \Cref{lem:deltaDetTwoRounding} and \Cref{lem:deltaDetOneRounding} are used to prove \Cref{thm:mainThmDelta}. 
  We begin with a $(1+\eps_1)$ approximation guarantee, roughly loose a $(1+\eps_1/\rho)$ factor in the approximation guarantee in each iteration of second part and roughly a $\ln\Deltab$ factor in the third part, which heuristically bounds the approximation guarantee as follows
	\[(1+\eps_1)(1+\eps_2/10)(1+\eps_1/(100\rho))^\rho \ln\Deltab\leq (1+\eps)\ln \Deltab~.\]

 %For \textbf{Part I} we use \Cref{lem:fractionalApproximation} to compute a $\eps/\poly \Delta$-fractional dominating set that is a $(1+\eps_1)$-approximation. 
 %In \textbf{Part II} we apply the respective lemma for $\rho\leq O(\log (\Delta/\eps))$ iterations with $\eps_2=\eps_1/\rho$ to increase the fractionality of the dominating set by a factor two per iteration until the fractionality is $1/F$.
 %In \textbf{Part III} we use the respective lemma to transform the fractional dominating set into an integral. 

To study the runtime of the algorithm   let $T_{col}(\Delta_L,\Delta_r)=O(\Delta_L\cdot\Delta_r+\Delta_L\cdot\logstar n)$ be the time to distance two color the right hand side of a bipartite graph with maximum degree $\Delta_L$ on the left hand side and $\Delta_R$ on the right hand side. Further note that $F = O(\varepsilon^{-3} \log \Delta)$ and let $s = \varepsilon^{-2} \log \Delta$. 
We obtain the following runtimes of the different parts of the algorithm

\begin{itemize}
  \item Part I: \Cref{lem:fractionalApproximation} $O(\varepsilon^{-4} \log^2\Delta)$
  \item Part II: $\rho$ iterations of \Cref{lem:deltaDetTwoRounding} take  $O(\rho\cdot s\cdot  T_{col}(s,\Deltab)$
  \item Part III: \Cref{lem:deltaDetOneRounding}: $O(F\cdot \Delta +  T_{col}(s,\Deltab)$~.
\end{itemize}

For \Cref{thm:mainThmN} we obtain a runtime of 
\begin{align*}
O(\eps^{-4}\log^2\Delta)+(\rho+1)\cdot 2^{O(\sqrt{\log n \log\log n}} & =O(\eps^{-4}\log^2\Delta)+2^{O(\sqrt{\log n \log\log n}}~.
\end{align*}
For \Cref{thm:mainThmDelta} we obtain a runtime of 
\begin{align*}
  &O(\varepsilon^{-4} \log^2 \Delta + (\Delta \varepsilon^{-2} \log \Delta + \varepsilon^{-2} \log \Delta \cdot \log^*n)\cdot \log \Delta + \Delta \varepsilon^{-3} \log \Delta + \varepsilon^{-3} \log^*n)\\
  =&O(\varepsilon^{-4} \log^2 \Delta + \Delta \varepsilon^{-2} \log^2 \Delta + \varepsilon^{-2} \log^2 \Delta \cdot \log^*n + \Delta \varepsilon^{-3} \log \Delta + \varepsilon^{-3} \log^*n)
\end{align*}

The exact upper bound on the size of the dominating set after part II is given by 
\begin{align*}
  (1 + \varepsilon_1) (1 + \frac{\varepsilon_1}{\rho})^\rho \cdot A + \sum_{i=0}^{\rho -1} \left(1+ \frac{\varepsilon}{\rho}\right)^i \left(\frac{n}{\Delta^4} + \frac{1}{n^6}\right)
  &\leq(1 + \varepsilon_1) e^{\varepsilon_1} \cdot A + \sum_{i=0}^{\rho -1} \left(1+ \frac{\varepsilon}{\rho}\right)^i \left(\frac{n}{\Delta^4} + \frac{1}{n^6}\right)\\
  &\leq(1 + \varepsilon) \cdot A + \sum_{i=0}^{\rho -1} \left(1+ \frac{\varepsilon}{\rho}\right)^i \left(\frac{n}{\Delta^4} + \frac{1}{n^6}\right)\\
  &=(1 + \varepsilon) \cdot A + \frac{\left(1+ \frac{\varepsilon_1}{\rho}\right)^\rho -1}{\left( 1 + \frac{\varepsilon_1}{\rho} \right) -1} \left(\frac{n}{\Delta^4} + \frac{1}{n^6}\right)\\
  &\leq(1 + \varepsilon) \cdot A + 2 \rho \left(\frac{n}{\Delta^4} + \frac{1}{n^6}\right)
  %\leq&(1 + \varepsilon) \cdot A + \frac{n}{\Delta^3}
\end{align*}
The exact upper bound on the size of the dominating set after part III is given by
\begin{align*}
  &\left((1 + \varepsilon) \cdot A + 2 \rho \left(\frac{n}{\Delta^4} + \frac{1}{n^6}\right)\right) \ln \Deltab + \frac{n}{\Deltab} + \frac{1}{n^5}
\end{align*}
Note that the approximation guarantee is at most $(1 + \varepsilon) (2+\ln\Deltab)$ as for small constant $\Delta$ part II is not executed at all and for $\eps>1/\polylog \Delta$ the term can be bounded as claimed. 
\end{proof}

%\clearpage
%\input{sec4-connectedMDS}

\section{Connected Dominating Set}
\label{app:CDS}
\label{sec:CDS}
In the following, we discuss how to extend the dominating set
algorithm \Cref{sec:derandomizationN} to obtain a logarithmic
approximation of the connected dominating set problem. We first recall
how to extend a given dominating set $S$ of a connected graph $G$ to a
connected dominating set (CDS) $\bar{S}$ of $G$. A proof for the
following statement for example appears in \cite{dubhashi05} (see also
\cite{guhakhuller98}).

\begin{claim}\label{claim:basicCDS}\cite{dubhashi05,guhakhuller98}
  Let $G=(V,E)$ and let $S\subseteq V$ be a dominating set of $G$.
  Further, let $G_S=(S,E_S)$ be a graph defined on the nodes in $S$,
  where there is an edges $\set{u,v}\in E_S$ for $u,v\in S$ if
  $d_G(u,v)\leq 3$. Then, $G_S$ is connected if and only if $G$ is
  connected.
\end{claim}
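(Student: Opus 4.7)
The plan is to prove the two directions of the equivalence separately, each time using the dominating set property to translate connectivity between $G$ and $G_S$.

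For the direction ``$G_S$ connected $\Rightarrow$ $G$ connected'', I would pick any $u,v\in V$ and, using that $S$ dominates $V$, select $s_u,s_v\in S$ with $s_u\in N(u)$ and $s_v\in N(v)$ (adopting the convention $u\in N(u)$ so that the case $u\in S$ is handled by $s_u=u$). By hypothesis there is a walk $s_u=\sigma_0,\sigma_1,\dots,\sigma_\ell=s_v$ in $G_S$. Each edge $\set{\sigma_i,\sigma_{i+1}}\in E_S$ corresponds, by the definition of $E_S$, to a path of length at most $3$ between $\sigma_i$ and $\sigma_{i+1}$ in $G$. Concatenating these short paths, and prepending/appending the (possibly trivial) edges $\set{u,s_u}$ and $\set{s_v,v}$, yields a walk from $u$ to $v$ in $G$.

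For the direction ``$G$ connected $\Rightarrow$ $G_S$ connected'', I take any two nodes $s,s'\in S$ and a shortest path $P=(v_0,v_1,\dots,v_k)$ from $s$ to $s'$ in $G$, which exists by connectivity of $G$. For every $v_i$ on $P$ I define a companion $\sigma_i\in S$ by setting $\sigma_i:=v_i$ if $v_i\in S$ and otherwise letting $\sigma_i$ be an arbitrary element of $N(v_i)\cap S$, which is nonempty because $S$ dominates. By the triangle inequality,
\[
d_G(\sigma_i,\sigma_{i+1})\leq d_G(\sigma_i,v_i)+d_G(v_i,v_{i+1})+d_G(v_{i+1},\sigma_{i+1})\leq 1+1+1=3,
\]
so either $\sigma_i=\sigma_{i+1}$ or $\set{\sigma_i,\sigma_{i+1}}\in E_S$. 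Hence $\sigma_0=s,\sigma_1,\dots,\sigma_k=s'$ induces a walk in $G_S$ from $s$ to $s'$.

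The two implications together give the stated equivalence. I do not expect a real obstacle: the whole argument is a standard ``path contraction'' exercise, and the only bookkeeping point is to adopt the conventions $u\in N(u)$ and $\sigma_i=v_i$ when $v_i\in S$ so that nodes already in $S$ (or the endpoints when they are already in $S$) are handled uniformly with the generic case.
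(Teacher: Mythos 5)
Your proof is correct: both directions are handled cleanly, with the domination property supplying the companion nodes and the triangle inequality giving $d_G(\sigma_i,\sigma_{i+1})\leq 3$, and the converse direction correctly expands each $E_S$-edge into a $G$-path of length at most $3$. The paper itself does not prove this claim but defers to \cite{dubhashi05,guhakhuller98}, and your argument is exactly the standard path-contraction proof used there, so there is nothing to object to.
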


\paragraph{Algorithm Outline:} Due to \Cref{claim:basicCDS} we can  compute a CDS
$\bar{S}$ of size at most $3|S|$ by first choosing a spanning tree
$T_S$ of $G_S$ and by then replacing each edge of $T_S$ by a path of
length at most $3$ in $G$ and adding the at most $2$ inner nodes of
the path to $\bar{S}$. The time to compute a spanning tree in a
distributed way is linear in the diameter of the graph. However, we
can replace the spanning tree $T_S$ by a slightly larger sparse
spanning subgraph that can be computed efficiently in the distributed
setting. In the \LOCAL model, this is straightforward and has for
example be done by Dubhashi et al.\ in \cite{dubhashi05}. In the
\CONGEST model, Ghaffari \cite{ghaffari14_CDS} showed that sparse
skeleton construction of Pettie \cite{pettie10} can be adapted to work
on the graph $G_S$ with small messages. Note however that the
algorithm of \cite{pettie10} and the algorithm of \cite{ghaffari14_CDS} are randomized. Alternatively, one could
try to use the more basic spanner algorithm of Baswana and Sen
\cite{baswana07}, this would however come at a cost of a factor
$O(\log n)$ in the number of edges of the spanning subgraph and thus
in the number of nodes of the constructed CDS. In the derandomized
version of the spanner algorithm of \cite{baswana07}, which has been
presented by Ghaffari and Kuhn in \cite{DISC18_DomSet}, this
additional cost even increases to an $O(\log^2 n)$ factor. In order to
avoid this, we will first show how to apply a clustering of the graph
$G_S$ to reduce the problem of finding a spanning subgraph on $G_S$ to
finding a spanning subgraph on a smaller graph $G_S'$ of size at most
$O(|S|/\log^2 n)$. We then show that it is possible to use the
technique of \cite{ghaffari14_CDS} in order to efficiently run the
deterministic spanner algorithm of \cite{DISC18_DomSet} on $G_S'$ in the \CONGEST model,
which then gives a CDS $\bar{S}$ of size $|\bar{S}|=O(|S|)$.

\paragraph{Reducing the Problem Size:}
We first describe how to reduce the problem to finding a sparse subgraph
on a smaller graph $G_S'$. We first compute a subset $S'$ of $S$ of
size at most $S/(c\log^2 n)$ for a sufficiently large constant $c$. By
using the \CONGEST model ruling set algorithm of
\cite{awerbuch89,CONGEST_rulingsets}, one can deterministically
compute a subset $S'$ in time $O(\log^3 n)$ such that any two nodes
$u,v\in S'$ are at distance $d_G(u,v) \geq c'\log^2 n$ for a chosen constant
$c'>0$ and such that every node in $S\setminus S'$ has some node in
$S'$ within distance $O(\log^3 n)$ in $G$. We then partition the set $S$ into $|S'|$ clusters such that each node $v\in S'$ is the center of a cluster and such that each node $u\in S\setminus S'$ joins the cluster of the closest cluster center in the graph $G_S$. In addition, for each cluster $C\subseteq S'$, we construct a subtree of $G$ that spans all the nodes in $S$ and that contains at most $3|C|$ nodes of $G$.

The clusters are constructed in a BFS-type fashion as follows. The
algorithm runs in phases $i=1,2,\dots$. For a node $v\in S'$, let
$C_v$ be $v$'s cluster and let $T_v$ be the subtree of $G$ that spans
$C_v$. At the beginning each cluster $C_v$ (and thus also each tree
$T_v$) only consists of the node $v$ itself. Each phase then consists
of three rounds, In the first $2$ rounds of the phase, we add nodes
from $V\setminus S$ (i.e., nodes that are not in the given dominating
set) to the clusters. Concretely, in the first round of phase $i$,
each node $w\in V\setminus S$ that is not yet in a cluster tree and
that neighbors a node $u\in S$ that has been added to a cluster $C_v$
in phase $i-1$ adds the edge $\set{w,u}$ to the tree $T_v$ and joins
the tree $T_v$. If there are several choices for the node $u$, node
$w$ chooses one of them arbitrarily. In the second round of phase $i$,
every node $w'\in V\setminus S$ that is not yet in a cluster tree and
that has a neighbor $w\in V\setminus S$ that joined a cluster tree
$T_v$ in the first round of phase $i$ joins the cluster tree $T_v$ of
$w$ and adds the edge $\set{w,w'}$ to $T_v$. Again, if $w'$ can choose
among different neighbors $w$, it chooses one arbitrarily. Finally, in
the third round of the phase, every node $u\in S$ that is not yet in a
cluster tries to join a cluster. A node $u\in S$ can join a cluster in
this step if it either has a neighbor $x\in S$ that joined a cluster
in the previous phase or if it has a neighbor $w\in V\setminus S$ that
joined a cluster in one of the first two rounds of phase $i$. Again,
if $u$ can choose among different nodes, to join a cluster, it picks
one arbitrarily and adds an edge to it to the respective cluster
tree. The clustering algorithm ends as soon as all nodes are in some
cluster. At the end, in every cluster tree $T_v$, we remove all nodes
$w\in V\setminus S$ for which there is no node of $S$ in its subtree
(i.e., we prune the tree to only contain the nodes that are necessary
to connect the actual cluster nodes). Finally, given the clustering,
we define the cluster graph $G_S'$ as follows. The nodes of $G_S'$ are
the $|S'|$ clusters of the clustering and two clusters are neighbors
in $G_S'$ if they contain nodes $u\in S$ and $v\in S$ that are
neighbors in $G_S$. The following lemma analyzes the
relevant properties of the described clustering algorithm.

\begin{lemma}\label{lemma:CDSclustering}
  The above clustering algorithm runs in time $O(\log^3 n)$ in the
  \CONGEST model and it constructs cluster trees of radius
  $O(\log^3 n)$. The total number of clusters is at most
  $|S|/(c\log^2 n)$ for a constant $c>0$ that can be chosen and the
  number of nodes in all the cluster trees is less than
  $3|S'|$. Further, the cluster graph $G_S'$ is connected if and only
  if $G$ is connected.
\end{lemma}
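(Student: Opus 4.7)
The plan is to verify the four claims in turn, all of which follow from a BFS-style analysis of the growth of clusters in $G_S$. The key observation is that one phase of the algorithm (three rounds in $G$) extends each cluster tree $T_v$ by exactly one hop in $G_S$: rounds 1 and 2 stretch through at most two consecutive $(V \setminus S)$-nodes, and round 3 attaches the next $S$-node within $G$-distance $3$ of the current boundary.

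For the runtime and tree radius, I would first prove a hop-compression lemma: since $S$ is a dominating set, any two $S$-nodes at $G$-distance $d$ are at $G_S$-distance $O(d/3)$, obtained by replacing each $(V\setminus S)$-node on a shortest $G$-path by an adjacent dominator. Combined with the ruling-set guarantee $d_G(u, S') = O(\log^3 n)$ for every $u \in S \setminus S'$, this yields $d_{G_S}(u, S') = O(\log^3 n)$. Hence $O(\log^3 n)$ phases suffice, giving the $O(\log^3 n)$-round \CONGEST runtime; since one $G_S$-hop contributes at most three $G$-hops, the $G$-radius of every $T_v$ is also $O(\log^3 n)$.

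The cluster count is controlled by the separation of $S'$. Distinct $v, v' \in S'$ satisfy $d_G(v, v') \geq c' \log^2 n$, and the same hop-compression argument gives $d_{G_S}(v, v') = \Omega(\log^2 n)$. Since each $u \in S \setminus S'$ is assigned to its closest center in $G_S$, the cluster of $v$ contains every $S$-node within $G_S$-distance $\Omega(\log^2 n)$ of $v$; in a connected graph a BFS-ball of radius $r$ contains at least $r+1$ nodes, so each cluster holds $\Omega(\log^2 n)$ elements of $S$, yielding $|S'| \leq |S|/(c \log^2 n)$ for a suitably chosen $c$. For the total tree size, each pruned $T_v$ spans the $|C_v \cap S|$ nodes of its cluster using at most $|C_v \cap S| - 1$ paths of $G$-length at most $3$, each contributing at most two extra $(V \setminus S)$-nodes, so $|V(T_v)| < 3 |C_v \cap S|$; summing over the cluster partition of $S$ yields the claimed bound. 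Finally, $G_S'$ is the quotient of $G_S$ obtained by contracting each cluster, so its connectedness is equivalent to that of $G_S$, which by Claim \ref{claim:basicCDS} is equivalent to connectedness of $G$.

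The main obstacle is the cluster-count argument: one must track constants carefully so that the $G_S$-Voronoi cells around the centers in $S'$ indeed each contain $\Omega(\log^2 n)$ nodes of $S$. This requires the hop-compression lemma stated above to convert the $G$-separation of $S'$ into a $G_S$-separation, and the boundary case where $G_S$ itself has fewer nodes than the target ball radius needs to be handled separately (in that regime the cluster count bound is trivial since $|S'|$ is a small constant).
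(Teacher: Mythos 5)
Your argument follows essentially the same route as the paper's own proof: phases of the clustering advance the trees by one $G_S$-hop, the ruling-set distance guarantee is transferred from $G$ to $G_S$ by replacing the non-$S$ nodes of a shortest path with adjacent dominators, the cluster-count bound comes from the $\Omega(\log^2 n)$ $G_S$-separation of the centers together with a ball-counting argument in the connected graph $G_S$, and connectivity reduces to \Cref{claim:basicCDS} via contraction of the (connected) clusters. The proposal is correct—indeed it additionally supplies the per-cluster tree-size count $|V(T_v)|<3|C_v\cap S|$ and the small-$|S|$ boundary case, which the paper's proof leaves implicit—with the understanding that the lemma's stated bound ``$3|S'|$'' is a typo for $3|S|$, which is exactly what your summation over the cluster partition of $S$ yields.
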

\begin{proof}
  Consider some node $u\in S$ and let $v\in S'$ be the closest node to
  $u$ in graph $G_S$ and assume that the distance between $u$ and $v$
  is $d$. It follows directly from the construction of the clustering
  (and by induction on the number phases) that $u$ joins some cluster
  in phase $d$ of the clustering algorithm. We next show that the
  number of phases and thus the maximum cluster tree diameter is at
  most $O(\log^3 n)$. The ruling set constructiong of $S'$ guarantees
  that every node $u\in S\setminus S'$ has some node $v$ in $S'$ at
  distance at most $O(\log^3n)$ in $G$. We need to show that the
  distance between $u$ and $v$ is also at most $O(\log^3 n)$ in
  $G_S$. To see this, consider a shortest path $P$ in $G$ connecting
  $u$ and $v$. Assume that the length of $P$ is $d=O(\log^3n)$. Each
  of the nodes of $P$ is dominated by some node in $S$. Assume that
  two neighbors $x$ and $y$ on $P$ are dominated by nodes $s_u\in S$
  and $s_v\in S$. Because the distance between $s_u$ and $s_v$ in $G$
  is at most $3$, $s_u$ and $s_v$ are connected by an edge in
  $G_u$. There therefore is a path of length at most $d=O(\log^3 n)$
  between $u$ and $v$ in $G_u$.

  To see that there are at most $|S|/(c\log^2 n)$ clusters, recall
  that any two nodes $u,v\in S'$ are at distance at least $c'\log^2n$
  from each other on $G$. Because a path of length $d$ in $G_S$
  implies a path of length at most $3d$ in $G$, $u$ and $v$ are
  therefore at distance at least $(c'/3)\log n$ in $G_S$. Each cluster
  is therefore of depth at least $(c'/6)\log^2n$ (in $G_S$) and it
  therefore contains at least $(c'/6)\log^2 n$ nodes of $S$. Because
  the clustering partitions the nodes of $S$, the number of clusters
  is therefore at most $6|S|/(c'\log n)$, which is at most
  $|S|/(c\log n)$ when choosing $c'$ sufficiently large.

  It remains to show that $G_S'$ is connected iff $G$ is
  connected. However, since two clusters are connected whenever there
  is an edge in $G_S$ connecting the two clusters, $G_S'$ is connected
  whenever $G_S$ is connected and this last claim therefore directly
  follows from \Cref{claim:basicCDS}.
\end{proof}

To complete the proof of \Cref{thm:mainThmCDS}, we next show that the deterministic spanner algorithm of \cite{DISC18_DomSet} can be efficiently applied on $G_S'$ to extend the set of cluster trees of the above clustering to a CDS of size at most $(3+\eps)|S|$. By its structure, the Baswana-Sen spanner algorithm \cite{baswana07} and also its derandomization in \cite{DISC18_DomSet} directly works on cluster graphs. As the edges of $G_S$ are paths of length at most $3$ in $G$, it is however not clear that we can efficiently communicate over the edges of $G_S$ (and thus of $G_S'$). 

\begin{proof}[{\bf Proof of \Cref{thm:mainThmCDS}:}]
  In \cite{DISC18_DomSet}, it
  is shown that on an $n$-node graph, a spanner with stretch $O(\log
  n)$ and $O(n\log^2 n)$ edges can be computed deterministically in
  $2^{O(\sqrt{\log n\log\log n})}$ in the \CONGEST model. The
  algorithm is based on a direct derandomization of the well-known
  randomized spanner construction by Baswana and Sen
  \cite{baswana07}. We next describe the steps of the
  algorithm that are necessary to understand its use in the context
  here. 

  Let $H=(V_H,E_H)$ be an $n$-node graph.  The algorithm consists of
  $k$ phases. At all times, the active nodes form disjoint
  clusters. At the beginning, each node is active and a cluster by
  itself. In each phase, each cluster is sampled with constant
  probability (say with probability $1/2$).\footnote{Note that the original
  algorithm by Baswana and Sen \cite{baswana07} allows to compute a
  spanner with stretch $2k-1$ for arbitrary $k$. The sampling
  probability in the algorithm is $n^{-1/k}$, which is equal to a
  constant for $k=\Theta(\log n)$.} At the end of the phase,
  only the sampled clusters survive and each node that is not in a
  sampled cluster either joins a cluster or becomes inactive. Let $v$
  be a node that is not in a sampled cluster. If $v$ has a neighbor
  $u$ in a sampled cluster, $v$ adds the edge $\set{u,v}$ to the
  spanner and joins the cluster of $u$ (if $v$ has more than one
  neighbor in sampled clusters, it chooses one arbitrarily). If $v$
  has no neighbor in a sampled cluster, it adds an edge to each
  neighboring cluster and becomes inactive. Because clusters are
  sampled with constant probability, a node that becomes inactive has
  at most $O(\log n)$ neighboring clusters, w.h.p., and it therefore
  also adds at most $O(\log n)$ edges to the spanner. In the last
  phase, there are at most $O(\log n)$ clusters and each active node
  adds an edge to each neighboring cluster. In each phase, each node,
  therefore adds at most $O(\log n)$ edges and thus, the total number
  of edges at the end is at most $O(n\log^2 n)$.\footnote{A more
    careful analysis actually shows that the number of edges is at
    most $O(n\log n)$.} It is clear that the resulting graph is
  connected. Assume now that the graph $H$ is actually a cluster
  graph, where the nodes of $H$ are represented by disjoint, connected
  clusters of diameter at most $d$ on some underlying graph
  $G$. Assume further that the clusters are connected by the edges
  of $G$ that connect nodes of different clusters. It is clear that at
  the cost of a factor $d$ in time, the above algorithm can also be
  run on $H$ by using the \CONGEST model on $G$.
 
  The only part that is randomized in the above spanner algorithm is
  the random sampling of the clusters. In \cite{DISC18_DomSet}, it is
  shown how this sampling can be replaced by a deterministic
  procedure. The requirements to run this deterministic sampling are
  also satisfied when running the algorithm on a cluster graph (in
  fact, since the spanner algorithm works on clusters anyways, the
  sampling in \cite{DISC18_DomSet} has to work for cluster graphs).

  The graph $G_S'$ is a cluster graph. However, the edges connecting
  the clusters are not edges of the underlying graph $G$, but they are
  edges of the graph $G_S$ (i.e., paths of length at most $3$ on
  $G$). It is not clear how to efficiently use all these $G_S$-edges
  in parallel in the \CONGEST model. In \cite{ghaffari14_CDS}, it is
  shown how to reduce the number of $G_S$ edges such that they can all
  efficiently be used in the \CONGEST model and such that the graph
  remains connected. We describe the method slightly applied to our
  context. 

  Recall that for each $v\in S'$, $C_v\subseteq S$ is the set of
  dominating set nodes in the cluster of node $v$. We need to connect
  any two clusters $C_u$ and $C_v$ by $G_S$-edges if there are two
  nodes $x\in C_u$ and $y\in C_v$ that are connected by $G_S$-edges
  (i.e., by a path of length at most $3$ in $G$). Note that we do not
  necessarily need to add one of $G_S$-edges that connectes $C_u$ with
  $C_v$ directly. We choose paths of length at most $3$ to connect the
  clusters as follows.

  \begin{enumerate}
  \item First of all, for any  two nodes $x\in C_u$ and $y\in C_v$
    that are neighbors in $G$, we add the edge $\set{x,y}$.
  \item For each node $w\in V\setminus S$, let $k(w)$ be the number of
    different clusters such that $w$ has a direct edge to some node
    $u\in S$ of the cluster. Each node with $k(w)\geq 1$ picks one
    neighbor for each adjacent cluster. Let these neighbors be
    $w_1,\dots,w_{k(w)}$. If $k(w)\geq 2$, node $w$ adds the $2$-hop paths
    $(w_{i},w,w_{i+1})$ as a $G_S$-edge between $w_i$ and $w_{i+1}$
    for each $i\in\set{1,\dots,k(w)-1}$.
  \item Let $w,w'\in V\setminus S$ be two $G$-neighbors with $k(w)\geq 1$
    and $k(w')\geq 1$. The two nodes $w,w'$ add the $3$-hop paths
    $(w_1,w,w',w'_{k(w')})$ and $(w_1',w',w,w_{k(w)})$
  \end{enumerate}
  
  It is clear that the added paths maintain the connectivity of the
  graph $G_S'$. Further, the paths are chosen such that each edge of
  $G$ is used in at most $2$ paths. We can therefore send $O(\log
  n)$-bit messages over all these paths in $O(\log n)$ rounds in the
  \CONGEST model on $G$. It is therefore now possible to efficiently
  apply the deterministic spanner algorithm of
  \cite{DISC18_DomSet}. The algorithm adds at most $O(|S'|\log^2|S'|)$
  $G_S$ edges and thus also at most $O(|S'|\log^2|S'|)$ nodes to the
  CDS. As the number of nodes in $S'$ is at most $|S|/(c\log^2 n)$,
  we have $O(|S'|\log^2|S'|)=O(\eps|S|)$ for a given constant $\eps>0$
  and a sufficiently large constant $c$. This completes the proof.
\end{proof}

%%% Local Variables:
%%% mode: latex
%%% TeX-master: "techreport"
%%% End:

\section{Conclusions}
\label{sec:conclusions}
We have efficient deterministic \CONGEST model algorithms to compute
almost optimal approximate solutions for the minimum dominating set
and the minimum connected dominating set problems. To keep things as
simple as possible, our algorithms are only given for the standard
(unweighted) dominating set problem. However two possible
generalizations are worth mentioning. The dominating set problem is a
special case of set cover problem, where one is given a collection of
finite sets that collectively cover some set of elements $X$ and the
objective is to choose a smallest possible number of those set, so
that still all elements in $X$ are covered. When considering the set
cover problem in a distributed context, it is usually modeled as a
graph by defining a node for each set and for each element and by
adding an edge between a set and an element node if and only if the
respective element is contained in the respective set (see, e.g.,
\cite{Kuhn06}). It is not hard to see that our algorithms can also be
(almost directly) applied to the more general set cover problem.

Another possible generalization would be to extend our algorithms to
also work for the weighted version of the dominating set
problem. Here, each node has a weight and the objective is to select a
dominating set of minimum total weight. We believe that our rounding
method would also work more or less in the same way for the weighted
dominating set problem (or for the weighted set cover problem). The
part of the algorithm that would require most care are the gradual
rounding steps, where we need to reduce the degrees of the constraint
nodes in order to obtain efficient algorithms. The partition of a
constraint into several lower order constraints seems less clear when
weights are involved. We should remark that while we believe that our
MDS algorithms can be generalized to also obtain good approximate
solutions for the weighted MDS problem, the same is not true for the
weighted connected dominating set problem. In \cite{ghaffari14_CDS},
Ghaffari shows that the weighted CDS problem has an
$\tilde{\Omega}(D + \sqrt{n})$ lower bound in the \CONGEST model on
graphs with diameter $D$. The paper also gives a
(randomized) \CONGEST algorithm that approximates the CDS problem in
time $\tilde{O}(D + \sqrt{n})$.

%\newpage
\bibliographystyle{alpha}
\bibliography{references}

\appendix

%\include{appendix}
%\include{app-connectedMDS}

%\section{TODO}
%\begin{itemize}
%\item Make sure that we use $\Deltab$ in all right places.
%\end{itemize}
\end{document}